\providecommand{\pb}[1]{{\sc #1} problem}
\providecommand{\entropy}{{\ensuremath\cal H}}
\providecommand{\idtt}[1]{\ensuremath{\mathtt{#1}}}
\providecommand{\nInv}{{\idtt{Inv}}}
\providecommand{\bbox}{\mathtt{Box}}
\providecommand{\xbox}{\mathtt{B}}
\providecommand{\ten}{{\idtt{Ten}}}
\newtheorem{theorem}{Theorem}[section]
\newtheorem{definition}[theorem]{Definition}
\newtheorem{lemma}[theorem]{Lemma}
\newenvironment{proof}{\begin{trivlist} \item[] {\it Proof.}}{\hfill $\Box$\end{trivlist}}
\begin{document}

\title{Adaptive Techniques to find Optimal Planar Boxes}

\author{
J. Barbay\thanks{Department of Computer Science, University of Chile, Chile}
\and
G. Navarro\thanks{Department of Computer Science, University of Chile, Chile.
Partially supported by Millennium Institute
for Cell Dynamics and Biotechnology (ICDB), Grant ICM P05-001-F, Mideplan,
Chile.}
\and
P. P\'erez-Lantero\thanks{Escuela de Ingenier\'ia Civil en Inform\'atica,
Universidad de Valpara\'{i}so, Chile. Partially supported by grant FONDECYT 11110069
and project MEC MTM2009-08652.}}

\maketitle

\begin{abstract}
  Given a set $P$ of $n$ planar points, two axes and a real-valued
  score function $f()$ on subsets of $P$, the \pb{Optimal Planar Box}
  consists in finding a box (i.e. axis-aligned rectangle) $H$
  maximizing $f(H\cap P)$.
  We consider the case where $f()$ is monotone decomposable,
  i.e. there exists a composition function $g()$ monotone in its two
  arguments such that $f(A)=g(f(A_1),f(A_2))$ for every subset
  $A\subseteq P$ and every partition $\{A_1,A_2\}$ of $A$.
  In this context we propose a solution for the \pb{Optimal Planar
    Box} which performs in the worst case $O(n^2\lg n)$ score
  compositions and coordinate comparisons, and much less on other
  classes of instances defined by various measures of difficulty.
  A side result of its own interest is a fully dynamic \textit{MCS
    Splay tree} data structure supporting insertions and deletions
  with the \emph{dynamic finger} property, improving upon previous
  results [Cort\'es et al., J.Alg. 2009].
\end{abstract}

\begin{LONG}
  \begin{minipage}[c]{.7\linewidth}
    \textbf{Keywords:} Computational Geometry, Adaptive Algorithms,
    Planar Boxes.
  \end{minipage}
\end{LONG}

\section{Introduction}\label{section:intro}

Consider a set $P$ of $n$ planar points, and two axes $x$ and $y$
forming a base of the plane, such that the points are in general
position (i.e. no pair of points share the same $x$ or $y$
coordinate).
We say that a real-valued function $f()$ on subsets of $P$ is
\emph{decomposable}~\cite{bautista2011,dobkin1989} if there exists a
\emph{composition function} $g()$ such that $f(A)=g(f(A_1),f(A_2))$
for every subset $A\subseteq P$ and every partition
$\{A_1,A_2\}$ of $A$.
Without loss of generality, we extend $f()$ to $P$ such that
$f(p)=f(\{p\})$.
A decomposable function is \emph{monotone} if the corresponding
composition function $g()$ is monotone in its two arguments.
A \emph{box} is a rectangle aligned to the axes, and given a monotone
decomposable function $f()$, such a box is $f()$-optimal if it
optimizes $f(H\cap P)$.
Without loss of generality, we assume that we want to maximize $f()$
and that its composition function $g()$ is monotone increasing in its
two arguments.
Given a monotone decomposable function $f()$ well defined for the
empty set $\varnothing$, a point $p$ of $P$ is \emph{positive} if
$f(p)>f(\varnothing)$. Otherwise, this point $p$ is {\em negative}.
Observe that if $p$ is positive then
$f(A\cup\{p\})=g(f(A),f(p))>g(f(A),f(\varnothing))=f(A)$ by
monotonicity of $g()$:
hence a point $p$ is positive if and only if $f(A\cup\{p\})>f(A)$ for
every subset $A\subset P$ not containing $p$.
A \emph{stripe} is an area delimited by two lines parallel to the same axis.
A \emph{positive stripe} (resp. \emph{negative stripe}) is one which contains only
positive (resp. negative) points.
A \emph{monochromatic stripe} is a stripe in which all points have the same sign.

Given a set of planar points, a simple example of such monotone
decomposable functions is counting the number of points contained in
the box.
Further examples include
counting the number of blue points;
returning the difference between the number of blue points and the
number of red points contained;
returning the number of blue points in the box or $-\infty$ if it
contains some red points;
summing the weights of the points contained;
taking the maximum of the weights of contained points; etc.

Given a set $P$ of $n$ planar points and a real-valued function $f()$
on subsets of $P$, the \pb{Optimal Planar Box} consists in finding an
$f()$-optimal box.
Depending of the choice of the function $f()$, this geometric optimization
problem has various practical applications, from identifying
rectangular areas of interest in astronomical pictures to the design
of optimal rectangular forest cuts or the analysis of medical
radiographies.
We present various adaptive techniques for the \pb{Optimal Planar
  Box}:
\begin{itemize}

\item In the worst case over instances composed of $n$ points, our
  algorithm properly generalizes Cort\'es et al.'s
  solution~\cite{cortes2009} for the \pb{Maximum Weight Box}, within
  the same complexity of $O(n^2\lg n)$ score compositions.

\item For any $\delta\in[1..n]$ and $n_1,\ldots,n_\delta\in[1..n]$
  summing to $n$, in the worst case over instances composed of
  $\delta$ monochromatic stripes of alternating signs when the points
  are sorted by their $y$-coordinates, such that the $i$-th stripe
  contains $n_i$ points, our algorithm executes $O(\delta
  n(1+\entropy(n_1,\ldots,n_\delta)))$ $\subset O(\delta
  n\lg(\delta+1))$ score compositions (Theorem~\ref{teo:adv-f()}),
  where $\entropy(n_1,\dots,n_\delta)=\sum_{i=1}^\delta (n_i/n)\lg
  (n/n_i)$ is the usual entropy function.
  \begin{LONG}
    (Of course the $x$ and $y$ axes can be interchanged.)
  \end{LONG}

\item Assuming the same $y$-coordinate order, for any
  $\lambda\in[0..n^2]$, in the worst case over instances where
  $\lambda$ is the sum of the distances between the insertion
  positions of the consecutive points according to their
  $x$-coordinate, our algorithm executes $O(n^2(1+\lg(1+\lambda/n))$
  score compositions (Lemma~\ref{lem:lambda}).
  Measure $\lambda$ relates to the local insertion sort complexity
  \cite{measuresOfPresortednessAndOptimalSortingAlgorithms} of the
  sequence of $x$-coordinates.
  It holds $\lambda \in O(n+\nInv)$, where $\nInv$ is the number of
  disordered pairs in the sequence.
  When the points are grouped into $\delta$ monochromatic stripes,
  this complexity drops to $O(n\delta(1+\lg(1+\nInv/n))$
  (Theorem~\ref{teo:rel-pos}).

\item Assuming the same $y$-coordinate order, for a minimal cover of
  the same sequence of $x$-coordinates into $\rho \le n$ \emph{runs}
  (i.e. contiguous increasing subsequences) of lengths
  $r_1,\ldots,r_\rho$, our algorithm executes
  $O(n^2(1+\entropy(r_1,\ldots,r_\rho)))$ $\subset O(n^2\lg(\rho+1))$
  score compositions (Lemma~\ref{lemma:rho}).
  When the points can be grouped into $\delta$ monochromatic stripes,
  this complexity decreases to
  $O(n\delta(1+\entropy(r_1,\ldots,r_\rho))) \subset
  O(n\delta\lg(\rho+1))$ (Theorem~\ref{teo:rel-pos} again).

\item In the case where subsets of points are clustered along the
  diagonal, our algorithm reduces to the corresponding sub-instances
  in linear time via a clever partitioning strategy
  (Theorem~\ref{teo:comp-optimal-box-diag-tree}), applying previous
  techniques on those sub-instances directly or indirectly
  (Theorem~\ref{teo:comp-optimal-box-windmills}).

\end{itemize}

\begin{LONG}
  Each of these general results applies to more specific problems such
  as the \pb{Maximum Weight Box}, where scores (i.e. the weights) can
  be composed (i.e. summed) through $g()$ in constant time.

  \medskip

\end{LONG}

We describe our algorithm progressively as follows:
\begin{enumerate}

\item After describing the existing solutions to more specific
  problems, we present a solution for the \pb{Optimal Planar Box}
  performing $O(n\lg n)$ coordinate comparisons and $O(n^2\lg n)$
  score compositions in the worst case over instances composed of $n$
  points, in Section~\ref{sec:optim-boxes-relat}.

\item We describe a truly dynamic \textit{MCS tree} data structure,
  supporting in good amortized time the insertion and deletion of
  points (as opposed to their mere activation, as in the original
  variant described by Cort\'es et al.~\cite{cortes2009}) and of
  sequences of points ordered by their coordinates along one axis,
  such as to support in particular the \emph{Dynamic Finger property}
  on score compositions and comparisons, in
  Section~\ref{sec:fully-dynamic-mcs}.

\item Observing that an instance composed mostly of positive or
  negative points is easier than a general instance, we describe a
  technique to detect and take advantage of monochromatic stripes,
  hence taking advantage of the sign distribution of the values of
  $f()$, in Section~\ref{sec:taking-advantage-f}.

\item Observing that the problem can be solved in $O(n\lg n)$
  coordinate comparisons and $O(n)$ score compositions when the points
  are projected in the same order on both axis, we describe in
  Section~\ref{sec:taking-advant-relat} a variant of the algorithm
  which takes advantage of the relative insertion positions of the
  consecutive points into a sorted sequence, by using our new dynamic
  MCS tree data structure.
  The resulting cost is related to that of local insertion sorting~\cite{measuresOfPresortednessAndOptimalSortingAlgorithms,estivillcastro92survey,anOverviewOfAdaptiveSorting},
  and we relate it with other measures of disorder of permutations,
  such as the number of inversions and the number of consecutive
  increasing
  runs~\cite{estivillcastro92survey,anOverviewOfAdaptiveSorting}.

\item Observing that sorting the points within each monochromatic
  stripe can only further reduce the inversions and runs complexity
  between the orders of the points, we combine the two previous
  techniques into a single algorithm, whose complexity improves upon
  both techniques, in Section~\ref{sec:final-algorithm}.

\item Observing that one dimensional instances can be solved in linear
  time even when plunged in the plane, whereas the techniques
  described above have quadratic complexity on such instances, we
  describe how to decompose instances where the points are even only
  partially aligned into diagonals of blocks, which yields an
  algorithm whose best complexity is linear and degrades smoothly to
  the previous worst case complexity as the points lose their
  alignment.

\item Finally, observing that sub-instances which cannot be decomposed
  into diagonals of blocks have a specific ``windmill'' shape, we show
  how to take advantage of this shape in order to still take advantage
  of particular instances.

\end{enumerate}

The proofs of the most technical results are presented in the
appendix, following an exact (automatic) copy of the statement of the
corresponding result.

\section{Optimal Boxes and Related Problems}
\label{sec:optim-boxes-relat}

\begin{LONG}
  \subsection{Maximization problems on rectangles in the plane}
  \label{sec:maxim-probl-rect}
\end{LONG}

Given a set $P$ of $n$ weighted planar points, in which the weight of a point can be
either positive or negative, the \pb{Maximum Weight Box}~\cite{cortes2009} consists in
finding a box $R$ maximizing the sum of the weights of the points in $R\cap P$.
Cort\'es et al.~\cite{cortes2009} gave an algorithm solving this problem in time
$O(n^2\lg n)$ using $O(n)$ space, based on \emph{MCS trees}, a data structure supporting
in $O(\lg n)$ time the dynamic computation of the \pb{Maximum-Sum Consecutive
Subsequence}~\cite{bentley1984} (hence the name ``MCS'').
Their solution applies to other similar problems, such as the \pb{Maximum
Subarray}~\cite{takaoka2002} which, given a two-dimensional array, consists in finding a
subarray maximizing the sum of its elements.

\begin{LONG}
  Given a set $P$ of $n$ planar points, each being colored either blue
  or red, the \pb{Maximum Box}~\cite{cortes2009,eckstein2002,liu2003}
  consists in finding a box $R$ containing the maximum number of blue
  points and no red point.
  First solved in $O(n^2\lg n)$ time by Liu and Nediak~\cite{liu2003},
  the problem was later reduced to a particular case of the
  \pb{Maximum Weight Box}~\cite{cortes2009}, in which blue points have
  weight $+1$ and red points have weight $-\infty$, and was thus
  solved within the same time complexity of $O(n^2\lg n)$ but with a
  different and simpler approach.
  Recently, Backer et al.~\cite{backer2010} showed that the
  \pb{Maximum Box} can be solved in $O(n\lg^3n)$ time and $O(n\lg n)$
  space. For dimension $d\geq 3$, the \pb{Maximum Box} in
  $\mathbb{R}^d$ is NP-hard if $d$ is part of the input of the
  problem~\cite{eckstein2002}. For fixed values of $d\geq 3$, the
  \pb{Maximum Box} in $\mathbb{R}^d$ can be solved in $O(n^d\lg
  ^{d-2}n)$ time~\cite{backer2010}.
\end{LONG}

\begin{LONG}
Given a set $P$ of $n$ planar points, each being colored either blue or red, the
\pb{Maximum Discrepancy Box}~\cite{cortes2009,dobkin1996} consists in finding a box that
maximizes the absolute difference between the numbers of red and blue points that it
contains.
Dobkin et al.~\cite{dobkin1996} gave a $O(n^2\lg n)$-time algorithm for it, after which
the problem was reduced to two particular cases of the \pb{Maximum Weight
Box}~\cite{cortes2009} with the same time complexity.
The first case is when red points have weight $+1$ and blue points weight $-1$, and vice
versa in the second case. We should point out that neither of the above algorithms to find
maximum boxes takes advantage of the distribution of the point set $P$ to reduce the time
complexity.
\end{LONG}

\begin{LONG}
Bereg et al.~\cite{bereg2010} solved the \pb{Moving Points Maximum Box}, a generalization
of the \pb{Maximum Box} to the context of moving points with an $O(n^2)$-space dynamic
data structure that maintains the solution up to date while the relative positions of the
points changes.
This yields an $O(n^3\lg n)$-time $O(n^2)$-space algorithm to solve the \pb{Maximum
Rectangle}, where the rectangle is not restricted to be axis aligned.
\end{LONG}

\begin{LONG}
  As mentioned in the introduction, the \pb{Optimal Planar Box}
  consists in finding a box (i.e. axis-aligned rectangle) $H$
  maximizing $f(H\cap P)$, given a set $P$ of $n$ planar points, two
  axes and a real-valued monotone decomposable score function $f()$ on
  subsets of $P$.
\end{LONG}
The \pb{Maximum Weight Box}~\cite{cortes2009} and, by successive
reductions, the \pb{Maximum Subarray}~\cite{takaoka2002}, the
\pb{Maximum Box}~\cite{cortes2009,eckstein2002,liu2003}, and the
\pb{Maximum Discrepancy Box}~\cite{cortes2009,dobkin1996} can all be
reduced to a finite number of instances of the \pb{Optimal Planar Box}
by choosing adequate definitions for the score functions $f()$ to
optimize.

\begin{LONG}
  \subsection{Algorithm for Optimal Boxes}
  \label{sec:algor-optim-boxes}

  The basic $O(n^2\lg n)$ algorithm~\cite{cortes2009} for the
  \pb{Maximum Weight Box} can easily be modified in order to handle
  increasing decomposable functions.
\end{LONG}

Cort\'es et al.'s algorithm~\cite{cortes2009} first sorts
the points by their $y$-coordinate in $O(n\lg n)$ time and then
traverses the resulting sequence of points $p_1, p_2, \ldots p_n$ as
follows.
For each $p_i$, it sets an MCS tree (described in more details in
Section~\ref{sec:fully-dynamic-mcs}) with points $p_i, \ldots p_n$,
where the key is their $x$-coordinate $x_i$, and all have value
$f(\varnothing)$.
It then successively activates points $p_j$ for $j\in[i..n]$, setting
its weight to value $f(p_j)$, updating the MCS tree so that to compute
the optimal box contained between the $y$-coordinate of $p_i$ to that
of $p_j$.
\begin{INUTILE}
  Points are sorted twice, once by $x$ and once by $y$ coordinate.
  Each point is augmented with a reference to the corresponding leaf.
\end{INUTILE}
The whole algorithm executes in time $O(n^2\lg n)$, corresponding to
$n^2$ activations in the tree, each performed in time $O(\log n)$.

\section{Fully Dynamic MCS Trees} 
\label{sec:fully-dynamic-mcs}

Cort\'es et al.~\cite{cortes2009} defined the MCS tree as an index for
a fixed sequence $S=(x_i)_{i\in[1..n]}$ of $n$ elements, where each
element $x_k$ of $S$ has a weight $w(x_k)\in\mathbb{R}$, so that
whenever a weight $w(x_k)$ is updated, a consecutive subsequence
$(x_i)_{i\in[l..r]}$ of $S$ maximizing $\sum_{i\in[l..r]} w(x_i)$ is
obtained (or recomputed) in $O(\lg n)$ time.
\begin{LONG}
  This data structure can adapt to the deactivation or activation of
  any element of the sequence, where deactivation of an element means
  to assign weight zero to it, and activation means to assign to a
  deactivated element its former weight.
\end{LONG}
This behavior is dynamic in the sense that it allows modification of
element weights, yet it is only partially dynamic in the sense that it
admits neither addition of new elements nor deletion of existing
elements.


Existing dynamic data structures can be easily adapted into a truly
dynamic data structure with the same functionalities as MCS trees.
We start by generalizing Cort\'es et al.'s algorithm and
data-structure~\cite{cortes2009} from mere additive weights to
monotone decomposable score functions in
Lemma~\ref{lemma:mcs-static-tree}.
We further generalize this solution to use an AVL tree~\cite{avl1962}
in Lemma~\ref{lemma:mcs-avl-tree} and a Splay tree~\cite{sleator1985}
in Lemma~\ref{lemma:mcs-splay-tree}, whose ``finger search'' property
will yield the results of Sections~\ref{sec:taking-advantage-f}
and~\ref{sec:taking-advant-relat}.

\begin{LONG}
  \subsection{MCS Tree for Monotone Decomposable Score Functions}
  \label{sec:overview-MCS-tree}

  Our first and simplest result is to generalize Cort\'es et al.'s
  algorithm and data-structure~\cite{cortes2009} from mere additive
  weights to monotone decomposable score functions.
\end{LONG}

\begin{lemma}\label{lemma:mcs-static-tree}
  Let $S$ be a static sequence of $n$ elements,
  and $f()$ be a monotone decomposable score function receiving as
  argument any subsequence of $S$, defined through the activation and
  deactivation of each element of $S$.
  There exists a semi-dynamic data structure for maintaining $S$
  using linear space that supports
  the search for an element in $O(\lg n)$ comparisons;
  the activation or deactivation of an element in $O(\lg n)$ score
  compositions; and
  $f()$-optimal sub range queries in $O(\lg n)$ comparisons and score
  compositions.
\end{lemma}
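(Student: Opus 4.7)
My plan is to generalize the MCS tree of Cort\'es et al.\ by replacing additive sums with the composition operator $g()$. I would build a balanced binary tree over $S$ whose leaves correspond to the $n$ elements in left-to-right order (for a static $S$, a perfect or nearly-perfect tree suffices). Each internal node $v$ covers a contiguous range and stores four summary values, aggregated over the currently active elements within that range: a total score $\tau(v)$; a best-prefix score $\pi(v)$, taken over all prefixes of active elements in the range (including the empty prefix, valued $f(\varnothing)$); a best-suffix score $\sigma(v)$, taken symmetrically over suffixes; and a best-internal score $\beta(v)$, taken over all contiguous active subranges within $v$.

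The key observation is that, for a node $v$ with children $L$ and $R$, the summaries satisfy
\[
\tau(v) = g(\tau(L),\tau(R)), \qquad \pi(v) = \max\bigl(\pi(L),\,g(\tau(L),\pi(R))\bigr),
\]
\[
\sigma(v) = \max\bigl(g(\sigma(L),\tau(R)),\,\sigma(R)\bigr), \qquad \beta(v) = \max\bigl(\beta(L),\,\beta(R),\,g(\sigma(L),\pi(R))\bigr).
\]
Each such update performs $O(1)$ score compositions and $O(1)$ comparisons. Correctness of the $\beta$ recurrence rests on the fact that any contiguous subrange of $v$ either lies entirely in $L$, lies entirely in $R$, or is the concatenation of a suffix of $L$ with a prefix of $R$; in the third case, monotonicity of $g$ in both arguments ensures that independently maximizing the suffix of $L$ and the prefix of $R$ yields the optimum over all such splits. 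Symmetric reasoning justifies the recurrences for $\pi$ and $\sigma$.

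Element searches are standard binary searches on the static tree in $O(\lg n)$ comparisons. Activating or deactivating an element modifies one leaf and triggers recomputation of the four summaries along the $O(\lg n)$ ancestors, for a total of $O(\lg n)$ score compositions and comparisons. An $f()$-optimal sub range query over positions $[l,r]$ decomposes $[l,r]$ into $O(\lg n)$ maximal subtree ranges (as in a standard segment-tree traversal) and combines their precomputed summaries from left to right using the same recurrences, again in $O(\lg n)$ compositions and comparisons. Total space is $O(n)$ since the tree has $O(n)$ nodes each storing a constant number of values.

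The main obstacle will be fixing the conventions for empty active subsets and for leaves so that the recurrences remain valid under arbitrary activation patterns; one needs, for instance, the identity $f(\varnothing) = g(f(\varnothing),f(\varnothing))$ (which follows from decomposability applied to the trivial partition of $\varnothing$) in order for empty candidates to compose consistently across node boundaries. Once every summary is defined to always include the empty candidate (valued $f(\varnothing)$) and each leaf stores $f(p)$ or $f(\varnothing)$ according to its activation status, the monotonicity of $g$ guarantees correctness of the combine step, and the remaining arguments are routine.
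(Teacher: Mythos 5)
Your proposal is correct and takes essentially the same approach as the paper's own proof: your four summaries $\tau,\pi,\sigma,\beta$ are exactly the paper's $I,L,R,M$, with identical combine recurrences, the same leaf/empty-set conventions, the same $O(\lg n)$ path update for (de)activation, and the same segment-tree decomposition for subrange queries. The point you flag about $f(\varnothing)$ being consistent under $g$ is handled in the paper by fixing the empty-interval score once and for all, so there is no substantive difference.
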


\begin{proof}
  Consider a monotone decomposable score function $f()$ on sequences
  (a particular case of monotone decomposable score functions on point
  sets described in Section~\ref{section:intro}) and its corresponding
  composition function $g()$.

  Given a fixed sequence $S=(x_i)_{i\in[1..n]}$ of $n$ elements, each
  element $x_i$ of $S$ is associated to a score $f(x_i)$ (a more
  general notion than the weight previously considered, in that
  $f(x_1,\ldots,x_n)$ can be computed more generally than by merely
  the sum of the individual scores $f(x_i)$).

  Then the MCS tree data structure consists of a binary balanced tree
  $T$ with $n$ leaves, in which the leaves of $T$ from left to right
  store the elements $x_1,x_2,\ldots,x_n$ of $S$.
  The term {\em interval} is used to denote the first and last indices
  of a consecutive subsequence of $S$ joint with the score of the
  corresponding sequence.
  We denote by $[x_l,x_r]$ the interval corresponding to the
  subsequence $(x_i)_{i\in[l..r]}$, and $[x_k,x_k]$ by $[x_k]$.
  Let $\varnothing$ denote the empty interval whose score is equal to
  zero, and given intervals $I_1,\ldots,I_t$ let
  $\max\{I_1,\ldots,I_t\}$ be the interval of maximum score among
  $I_1,\ldots,I_t$.
  Each node $v$ of $T$ stores the following four maximum-score
  intervals, where $x_l,x_{l+1},\ldots,x_r$ are, from left to right,
  the leaves of $T$ descendant of $v$:
  \begin{itemize}
  \item $I(v)=[x_l,x_r]$, the information about the full sequence;
  \item
    $L(v)=\max\{[x_l,x_l],[x_l,x_{l+1}],\ldots,[x_l,x_r],\varnothing\}$,
    the information about the best prefix of the sequence (where $L$
    stands for ``Left'');
  \item
    $R(v)=\max\{[x_l,x_r],[x_{l+1},x_r],\ldots,[x_r,x_r],\varnothing\}$,
    the information about the best suffix of the sequence (where $R$
    stands for ``Right''); and finally
  \item $M(v)=\max\{\max\{[x_{l'},x_{r'}]:l\leq l'\leq r'\leq
    r\},\varnothing\}$, the information about the best factor of the
    sequence (where $M$ stands for ``Middle'').
  \end{itemize}
  Then the maximum-score consecutive subsequence of $S$ is given by
  the interval $M()$ of the root node of $T$.
  Given two contiguous intervals $[x_l,x_k]$ and $[x_{k+1},x_r]$, let
  $[x_l,x_k]+[x_{k+1},x_r]$ denote the interval $[x_l,x_r]$, of score
  obtained through the combination of the scores of of $[x_l,x_k]$ and
  $[x_{k+1},x_r]$.
  Consider $I+\varnothing=\varnothing+I=I$ for every interval $I$.
  The key observation in the MCS tree is that if we have computed
  $I(v_1)$, $L(v_1)$, $R(v_1)$, $M(v_1)$, $I(v_2)$, $L(v_2)$,
  $R(v_2)$, and $M(v_2)$, where nodes $v_1$ and $v_2$ are the left and
  right children of node $v$, respectively, then $I(v)$, $L(v)$,
  $R(v)$, and $M(v)$ can be computed through a constant number of
  operations:
  $I(v)=I(v_1)+I(v_2)$,
  $L(v)=\max\{L(v_1),g(I(v_1),L(v_2))\}$,
  $R(v)=\max\{g(R(v_1),I(v_2)),R(v_2)\}$, and
  $M(v)=\max\{M(v_1),M(v_2),g(R(v_1),L(v_2))\}$.
  These observations can be used, whenever the score of an element
  $x_i$ of $S$ is modified, to update through a constant number of
  operation each node in the path from the leaf of $T$ storing the
  score of a sequence containing $x_i$ to the root.
  Since $T$ is a balanced tree this corresponds to $O(\lg n)$
  operations.
  The MCS tree data structure also supports optimal subrange queries
  through $O(\lg n)$ operations, that is, given a range $\mathtt{R}$
  of sequence $S$ reports a subrange $\mathtt{R}'$ of $\mathtt{R}$
  such that the score of $S$ in $\mathtt{R}'$ is maximized.
\end{proof}

\begin{LONG}
  \subsection{MCS AVL Tree and Dynamic Insertion and Deletion}
  \label{sec:MCS-AVL-tree}
\end{LONG}

The MCS tree data structure can be converted into a truly dynamic data
structure supporting both insertions and deletions of elements.
This data structure can be used to index a dynamic sequence
$S=(x_i)_{i\in[1..n]}$ of $n$ elements so that whenever an element is
inserted or removed, a consecutive subsequence $S'=(x_i)_{i\in[l..r]}$
of $S$ optimizing $f(S')$ can be (re)computed in $O(\lg n)$ score
compositions and comparisons.
The following lemma establishes the property of this data structure,
which we call {\em MCS AVL tree}.

\begin{lemma}\label{lemma:mcs-avl-tree}
  Let $S$ be a dynamic sequence of $n$ elements,
  and $f()$ be a monotone decomposable score function receiving as
  argument any consecutive subsequence of $S$.
  There exists a fully dynamic data structure for maintaining $S$
  using linear space that supports
  the search for an element in $O(\lg n)$ comparisons;
  the update of the weight of a point in $O(\lg n)$ score
  compositions,
  the insertion or deletion of an element in $O(\lg n)$ comparisons
  and score compositions; and
  $f()$-optimal subrange queries in $O(\lg n)$ comparisons and score
  compositions.
\end{lemma}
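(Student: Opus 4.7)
The plan is to build directly on Lemma~\ref{lemma:mcs-static-tree} by replacing its static balanced binary tree with a self-balancing AVL tree~\cite{avl1962}, while keeping exactly the same MCS labeling scheme. I would store the elements of $S$ at the leaves in left-to-right order matching their sequence order, and maintain at every internal node $v$ the four intervals $I(v)$, $L(v)$, $R(v)$, $M(v)$ as defined in the proof of Lemma~\ref{lemma:mcs-static-tree}. The root's $M$ then always equals an $f()$-optimal consecutive subsequence of $S$, and a subrange query reduces, as in the static case, to combining the $I,L,R,M$ summaries of the $O(\lg n)$ canonical subtrees that cover the queried range, using the same composition formulas.

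For search and weight updates nothing new is needed: a search descends to a leaf in $O(\lg n)$ comparisons, and a weight update only touches the path from the modified leaf to the root, whose length is $O(\lg n)$ by the AVL height bound, with each node on that path recomputing its four intervals from its two children's in $O(1)$ score compositions.

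The novel steps are insertion and deletion. To insert a new element I would first locate its position via a standard BST search in $O(\lg n)$ comparisons, expand the target leaf into an internal node whose children are the old and the new leaf, refresh the MCS labels on the path to the root in $O(\lg n)$ compositions, and then trigger AVL rebalancing; a deletion removes the target leaf, contracts its parent, refreshes labels up to the root, and rebalances. The key observation is that every AVL rotation preserves the in-order traversal of the leaves, so the subsequence associated with any rotated node is still the concatenation of the subsequences of its new children, and its $I,L,R,M$ labels can be re-derived in $O(1)$ score compositions from those of its new children using the formulas of Lemma~\ref{lemma:mcs-static-tree}. Since insertions incur $O(1)$ rotations and deletions incur $O(\lg n)$ rotations along the rebalancing path, both operations stay within $O(\lg n)$ score compositions and comparisons.

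The main obstacle is conceptual rather than computational: I have to check that rotations never invalidate the MCS labels, and that all labels they disturb can be refreshed in $O(1)$ per rotation from the children's labels alone, with no need to look below. This is exactly what the compositional definitions of $I,L,R,M$ in Lemma~\ref{lemma:mcs-static-tree} guarantee, since they depend only on the children's summaries and on the left-to-right order of their subtrees' leaves, both of which are preserved by any single or double rotation. With that observation in hand the four claimed functionalities follow immediately within the stated complexity.
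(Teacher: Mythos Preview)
Your proposal is correct and follows essentially the same approach as the paper: augment an AVL tree with the four MCS summaries $I,L,R,M$ and observe that each rotation touches $O(1)$ nodes whose summaries can be recomputed from their (new) children alone, so insert/delete stay within $O(\lg n)$ compositions. The only noteworthy difference is that you keep the elements at the leaves and reuse the formulas of Lemma~\ref{lemma:mcs-static-tree} verbatim, whereas the paper stores an element $x_k$ at every node (standard BST layout) and therefore writes slightly modified formulas that splice $[x_k]$ between the children's summaries, e.g.\ $I(v)=I(v_1)+[x_k]+I(v_2)$ and $L(v)=\max\{L(v_1),\,I(v_1)+[x_k]+L(v_2)\}$; both variants are standard and yield the same bounds.
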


\begin{proof}
  Let $S=(x_i)_{i\in[1..n]}$, $f()$ the monotone decomposable score
  function, and $g()$ its composition function.
  Consider an AVL tree $T$ of $n$ nodes such that for $i\in[1..n]$ the
  $i$-th node in the in-order traversal of $T$ stores element $x_i$ of $S$.
  We now generalize the MCS data structure.

  Each node $v$ of $T$, in which the in-order traversal of the subtree rooted at $v$ reports the elements $x_l,x_{l+1},\ldots,x_r$ of $S$, is augmented with the same four intervals $I(v)$, $L(v)$, $R(v)$, and $M(v)$ used by the MCS data structure.
  If node $v$ stores the element $x_k$ and if the intervals $I(v_1)$, $L(v_1)$, $R(v_1)$, $M(v_1)$, $I(v_2)$, $L(v_2)$, $R(v_2)$, and $M(v_2)$ of the left ($v_1$) and right ($v_2$) children of $v$ have been computed; then $I(v)$, $L(v)$, $R(v)$, and $M(v)$ can be computed in constant number of score compositions as follows:
  \begin{itemize}
  \item $I(v)=I(v_1)+[x_k]+I(v_2)$
  \item $L(v)=\max\{L(v_1),I(v_1)+[x_k]+L(v_2)\}$
  \item $R(v)=\max\{R(v_1)+[x_k]+I(v_2),R(v_2)\}$
  \item $M(v)=\max\{M(v_1),M(v_2),R(v_1)+[x_k]+L(v_2)\}$
  \end{itemize}
  In this computation the value $f()$ of every interval of the form $I_1+[x_k]+I_2$ is equal to $g(g(f(S_1),f(x_k)),f(S_2))$, where $S_1$ and $S_2$ are the subsequences corresponding to $I_1$ and $I_2$, respectively.
  For empty (or null) nodes $v$, $I(v)=L(v)=R(v)=M(v)=\varnothing$.
  This states how these intervals are computed for leaves and one-child nodes.
  We show that the computation of $L(v)$ is correct by considering separately the case where $L(v)$ contains $x_k$ and the case where it does not.
  If $x_k$ is not in $L(v)$, then $L(v)$ must consider only the elements $x_l,x_{l+1},\ldots,x_{k-1}$, and is thus equal to $L(v_1)$.
  Otherwise, if $x_k$ belongs to $L(v)$, then $L(v)$ must have the form $I(v_1)+[x_k]+I_2$, where $I_2$ is either $\varnothing$ or an interval of the form $[x_{k+1},x_j]$, $j\in[k+1..r]$.
  Observe that $I(v_2)$ has the same form as $I_2$ and maximizes $f()$.
  Since $g()$ is monotone increasing in its first and second arguments, then
  \begin{eqnarray*}
    f(I(v_1)+[x_k]+I_2)&=&g\left(g(f(I(v_1)),f(x_k)),f(I_2)\right)\\
    &\leq&g(g(f(I(v_1)),f(x_k)),f(I(v_2)))\\
    &=&f(I(v_1)+[x_k]+I(v_2)).
  \end{eqnarray*}
  Therefore, $I(v_1)+[x_k]+I(v_2)$ is a correct choice for $L(v)$ if $x_k$ belongs to
  $L(v)$. Similar arguments can be given for $R(v)$ and $M(v)$.
  
  If the elements of $S$ are ordered by some key, then $T$ is a binary search tree on that key, and insertions and deletions are performed by using the binary search criterion. Otherwise, if elements of $S$ are not ordered, we further augment every node $v$ of $T$ with the number of nodes in the subtree rooted at $v$ in order to perform both insertions and deletions by rank~\cite{cormen2001}.
  In either case, whenever we insert or delete an element we spend $O(\lg n)$ comparisons.  Furthermore, when a leaf node is inserted, or a leaf or one-child node is deleted, we perform a bottom-up update of all nodes of $T$ in the path from that node to the root.
  Let $v$ be any of those nodes.
  If the update at $v$ does not require a rotation, then the augmented information of $v$ is computed in $O(1)$ score compositions from the same information of its two children nodes, which are already computed.
  Otherwise, if the update $v$ requires a rotation, once the rotation is performed, we update bottom-up the augmented information of the nodes changed in the rotation (they include $v$ and the node getting the position of $v$) from the same information of their new children nodes.
  This is done again in $O(1)$ score compositions since there are at most three affected nodes in any rotation and the augmented information of their new children nodes is already computed.
%
  The update of $v$ always requires $O(1)$ score compositions, and then $O(\lg n)$ score compositions are used in total at each insertion or deletion.
  The subsequence $S'$ corresponding to the interval $M()$ of the root of $T$ is the subsequence of $S$ that maximizes $f()$, and is updated at each insertion or deletion.

  Optimal subrange queries on $T$ can be performed as follows. Let $x_l$ and $x_r$ be the first and last elements of the input range $\mathtt{R}$, respectively. Using $O(\lg n)$ comparisons we find: the node $v_l$ storing $x_l$, the node $v_r$ storing $x_r$, and the least common ancestor node $v_{l,r}$ of $v_l$ and $v_r$.
  Then the intervals $R_l$, $M_l$, $L_r$, and $M_r$ are computed: We first initialize $R_l:=M_l:=\varnothing$, and after that, for each node $v$ taken bottom-up from $v_l$ to the left child of $v_{l,r}$, such that element $x_k$ stored in $v$ belongs to $\mathtt{R}$, we perform the updates $R_l:=\max\{R_l+[x_k]+I(v_2),R(v_2)\}$ and $M_l:=\max\{M_l,M(v_2),R_l+[x_k]+L(v_2)\}$ in this order, where $v_2$ is the right child of $v$. Similarly, we initialize $R_r:=M_r:=\varnothing$ and for each node $v$ taken bottom-up from $v_r$ to the right child of $v_{l,r}$, such that element $x_k$ stored in $v$ belongs to $\mathtt{R}$, we do $L_r:=\max\{L(v_1),I(v_1)+[x_k]+L_r\}$ and $M_r:=\max\{M(v_1),M_r,R(v_1)+[x_k]+L_r\}$, where $v_1$ is the left child of $v$. Finally, the optimal subrange of $\mathtt{R}$ is equal to $\max\{M_l,M_r,R_l+[x_{l,r}]+L_r\}$, where $x_{l,r}$ is the element of $S$ stored in $v_{l,r}$. The whole computation requires $O(\lg n)$ score compositions.

Therefore, the tree $T$
satisfies all conditions of the lemma.
\end{proof}

\begin{LONG}
  \subsection{MCS Splay Tree and Dynamic Finger Property}
  \label{sec:MCS-Splay-tree}
\end{LONG}

The Splay tree is a self-adjusting binary search tree created by
Sleator and Tarjan~\cite{sleator1985}. 
It supports the basic operations search, insert and delete, all of
them called {\em accesses}, in $O(\lg n)$ amortized time. 
For many sequences of accesses, splay trees perform better than other
search trees, even when the specific pattern of the sequences are
unknown.
\begin{LONG}
  It is a simply binary search tree in which each access can cause
  some rotations in the tree, similar but more diverse than the
  rotations used by the AVL tree, so that each element accessed is
  moved to the root by a succession of such rotations.
\end{LONG}
Among other properties of Splay trees, we are particularly interested
in the {\em Dynamic Finger Property}, conjectured by Sleator and
Tarjan~\cite{sleator1985} and proved by Cole et al.~\cite{cole2000-1}:
every sequence of $m$ accesses on an arbitrary $n$-node Splay tree
costs $O(m+n+\sum_{j=1}^m \lg(d_j+1))$ rotations where, for $j=1..m$,
the $j$-th and $(j-1)$-th accesses are performed on elements whose
ranks among the elements stored in the Splay tree differ by $d_j$.
For $j=0$, the $j$-th element is the element stored at the root. 
It is easy to see that in the MCS AVL tree we can replace the
underlying AVL tree by a Splay tree, and obtain then the next lemma,
which describes the {\em MCS Splay tree} data structure.

\begin{lemma}\label{lemma:mcs-splay-tree}
  Let $S$ be a dynamic sequence of $n$ elements and $f()$ be a
  monotone decomposable function receiving as argument any consecutive
  subsequence of $S$.
  There exists a data structure for maintaining $S$ that uses linear
  space and supports 
  the search in $O(\lg n)$ amortized comparisons,
  the update of the weight of a point in $O(\lg n)$ amortized score
  compositions,
  and the insertion and deletion of elements in $O(\lg n)$ amortized
  comparisons and score compositions.
  Joint with the insertion or deletion of any element, the consecutive
  subsequence $S'$ of $S$ maximizing $f(S')$ is recomputed.
  The Dynamic Finger Property is also satisfied for each operation
  (search, insertion and deletion), both for the number of comparisons
  and for the number of score compositions performed.
\end{lemma}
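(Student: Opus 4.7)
The plan is to mirror the proof of Lemma~\ref{lemma:mcs-avl-tree}, replacing the underlying AVL tree by a Splay tree and verifying that the same augmentation of each node with the four intervals $I(v), L(v), R(v), M(v)$ can be maintained. The correctness of the four recurrences for $I,L,R,M$ at an internal node in terms of its children carries over verbatim, because it depended only on the fact that the tree is a binary search tree whose in-order traversal lists the elements of $S$ in their sequence order, and on the monotonicity of $g()$. The optimal subrange query implementation from Lemma~\ref{lemma:mcs-avl-tree} is likewise unaffected.

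The main point to check is that the cost of maintaining the augmentation is proportional, within a constant factor, to the number of structural changes performed by the Splay tree. A splay operation is a sequence of zig, zig-zig, and zig-zag rotations; each such rotation involves $O(1)$ nodes whose children change, and for every such node the four augmented intervals can be recomputed from the already-computed augmentations of its new children in $O(1)$ score compositions, exactly as in the AVL case. Consequently, if an access, insertion, or deletion performs $r$ rotations plus $O(\lg n)$ comparisons to locate the element, then the augmentation is brought up to date using $O(r+\lg n)$ score compositions. After an insertion of a new leaf or the removal of a leaf/one-child node, the standard bottom-up walk to the root along the splayed path similarly costs $O(\lg n)$ amortized compositions, by the access lemma for Splay trees.

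Combining the above with the $O(\lg n)$ amortized bound on the number of rotations and comparisons per access by Sleator and Tarjan~\cite{sleator1985} gives the $O(\lg n)$ amortized bounds for search, weight update, insertion, and deletion stated in the lemma, both for comparisons and for score compositions. After every insertion or deletion, the optimal subsequence is read off from $M(\mathrm{root})$, which is kept up to date by the rebuilding just described.

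For the Dynamic Finger Property, we invoke the result of Cole et al.~\cite{cole2000-1}: any sequence of $m$ accesses on an $n$-node Splay tree uses $O(m+n+\sum_{j=1}^{m}\lg(d_j+1))$ rotations, where $d_j$ is the rank distance between consecutive accessed elements. Since each rotation triggers $O(1)$ augmentation updates, and each access also triggers $O(1)$ such updates at the accessed node itself, the total number of score compositions over the sequence is bounded by the same expression, up to a constant factor; the same statement holds trivially for comparisons. The main subtlety, and the only non-routine step, is precisely the bookkeeping that verifies each individual rotation indeed costs only $O(1)$ score compositions to maintain $I,L,R,M$; this is the main obstacle, but it reduces to the fact that every node whose children change during a rotation has its new children already carrying correct augmentations, so the recurrences of Lemma~\ref{lemma:mcs-avl-tree} apply directly.
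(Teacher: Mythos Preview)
Your proposal is correct and follows essentially the same route as the paper: replace the AVL tree underlying Lemma~\ref{lemma:mcs-avl-tree} by a Splay tree, keep the same $I,L,R,M$ augmentation, observe that each rotation touches $O(1)$ nodes whose augmentation is recomputed in $O(1)$ score compositions, and then invoke Sleator--Tarjan for the $O(\lg n)$ amortized bounds and Cole et al.\ for the Dynamic Finger Property. The paper's proof is terser but identical in substance; the one detail it spells out that you leave implicit is that the weight-update operation is reduced to a deletion followed by an insertion (or, equivalently, a splay on the accessed element), which is what justifies its amortized bound.
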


\begin{proof}
  The complexities of the accesses, and also the Dynamic Finger
  Property, follow from the Splay tree properties and from the fact
  that the augmented information in each node can be computed in a
  constant number of score compositions and comparisons, from the
  corresponding information stored in its children nodes.
  Since after each rotation the augmented information of the affected
  nodes of the rotation is updated, the consecutive subsequence $S'$
  of $S$ maximizing $f(S')$, which is stored at the root, is
  recomputed.

  As in the case of the MCS AVL tree, updating the weight of an
  element can be reduced to removing and inserting it with its new
  weight, so that its support in $O(\lg n)$ amortized comparisons and
  score compositions is a simple consequence of the support of
  insertions and deletions.
  Obviously, in practice the update operator can be implemented much
  more simply but still requires a rebalancing of the tree on the
  element accessed in order to yield the amortized complexity.
\end{proof}

\begin{LONG}
Among all the instances possible over $n$ points, some instances are easier than some
others, and using the MCS Splay tree instead of a static balanced binary tree permits to
take advantage of some of those instances.
We describe in the next section the simplest way to do so, by taking advantage of the
proportion of points contributing positively to the score of boxes.
\end{LONG}

\section{Taking Advantage of Monochromatic Stripes} 
\label{sec:taking-advantage-f}

Consider an instance %
\begin{LONG}
  where all the points contributing positively to the score of boxes
  (the positive points) are gathered in the top half of the plane,
  while all the points contributing negatively (the negative points)
  are gathered in the bottom of the plane: trivially, one just needs
  to compute a box containing all the positive points and no negative
  points to solve the \pb{Optimal Planar Box}, a task performed in
  linear number of comparisons and score compositions.

  Now consider a more general instance,
\end{LONG} %
where positive and negative points can be clustered into $\delta$
positive and negative stripes along one given axis, of cardinalities
$n_1,\ldots,n_\delta$. On such instances one does not need to consider
boxes whose borders are in the middle of such stripes: all optimal
boxes will start at the edge of a stripe; specifically, the top
(resp. bottom) of an optimal box will align with a positive point at
the top (resp. bottom) of a positive stripe.

This very simple observation not only limits the number of boxes for
which we need to compute a score, but also it makes it easier to
compute the score of each box: adding the $n_i$ points of the $i$-th
stripe in increasing order of their coordinates in a MCS Splay tree of
final size $n$ amortizes to $O(n+\sum_{i=1}^\delta n_i\lg(n/n_i))$
coordinate comparisons and score compositions.
The reason is that the $n_i$ distances $d_j+1$ of
Lemma~\ref{lemma:mcs-splay-tree} telescope to at most $n+n_i$ within
stripe $i$, and thus by convexity the cost $O(n+\sum_{j=1}^n
\lg(d_j+1))$ is upper bounded by
\begin{eqnarray*}
O\left(n+\sum_{i=1}^\delta n_i\lg(1+n/n_i)\right)&\subset&O\left(n+\sum_{i=1}^\delta
n_i\lg(n/n_i)\right)\\
&=&O(n(1+\entropy(n_1,\ldots,n_\delta)))\\
&\subset&O(n\lg(\delta+1))
\end{eqnarray*}
%
Combining this with the fact that the top of an optimal box is aligned
with a positive point at the top of a positive stripe yields the
following result.

\begin{theorem}\label{teo:adv-f()}
  For any $\delta\in[1..n]$ and $n_1,\ldots,n_\delta\in[1..n]$ summing to $n$,
  in the worst case over instances composed of $\delta$ stripes of
  alternating signs over an axis such that the $i$-th stripe contains
  $n_i$ points,
  there exists an algorithm that finds an $f()$-optimal box in
  $O(\delta n(1+\entropy(n_1,\ldots,n_\delta)))\subset O(\delta
  n\lg(\delta+1))$ score compositions and $O(\delta
  n(1+\entropy(n_1,\ldots,n_\delta))+n\lg n)\subset O(\delta
  n\lg(\delta+1)+n\lg n)$ coordinate comparisons.
\end{theorem}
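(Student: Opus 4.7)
The plan is to reduce the \pb{Optimal Planar Box} to $O(\delta)$ independent sweeps, each driven by a single MCS Splay tree from Lemma~\ref{lemma:mcs-splay-tree}, and to charge the insertion cost of each sweep to the entropy $\entropy(n_1,\dots,n_\delta)$ through the Dynamic Finger Property.

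As preprocessing I would sort the points by $y$-coordinate and by $x$-coordinate in $O(n\lg n)$ coordinate comparisons, and read off the $\delta$ monochromatic stripes from the $y$-order. As sketched just before the theorem statement, the monotonicity of $g()$ implies that the top (resp.\ bottom) edge of some $f()$-optimal box aligns with the topmost (resp.\ bottommost) positive point of a positive stripe: if the top sat at a negative point one could push it down one position and obtain at least as good a score, iterating until a positive point is reached; and extending upward through an only partially included positive stripe can only add positive contributions, so the top may be assumed to align with the top of a positive stripe fully contained. This leaves $O(\delta)$ candidate $y$-values for each horizontal edge.

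For every one of the $O(\delta)$ candidate top edges I would run a downward sweep with an initially empty MCS Splay tree $T$ keyed by $x$-coordinate. The sweep visits the stripes below the fixed top from top to bottom; for each stripe it inserts its points into $T$ in their precomputed $x$-order, and whenever the sweep reaches the bottom of a positive stripe (hence a candidate bottom edge), a single access to the root of $T$ returns in $O(\lg n)$ amortized score compositions the best $x$-range for the box with that top and bottom. The algorithm records the globally best box seen.

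For the analysis, I would apply Lemma~\ref{lemma:mcs-splay-tree} stripe by stripe inside one sweep. Inserting the $n_i$ points of stripe $i$ in their common $x$-sorted order is a run of $n_i$ accesses whose consecutive rank gaps telescope to at most $n+n_i$; by concavity of $\lg$ this contributes $O(n_i(1+\lg(n/n_i)))$ score compositions and comparisons. Summing over the $\delta$ stripes yields $O(n(1+\entropy(n_1,\dots,n_\delta)))$ per sweep, the $O(\delta\lg n)$ root queries being subsumed, and multiplying by the $O(\delta)$ outer iterations gives the advertised $O(\delta n(1+\entropy(n_1,\dots,n_\delta)))$ bound on score compositions; adding the $O(n\lg n)$ initial sort yields the coordinate-comparison bound. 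The main obstacle is to ensure that the Dynamic Finger argument applies cleanly across the $O(\delta)$ reinitialisations of $T$; this is handled by starting each sweep from an empty tree, so each iteration's amortized bound is self-contained.
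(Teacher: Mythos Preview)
Your proposal is correct and follows essentially the same approach as the paper: restrict the outer loop to the $O(\delta)$ candidate top edges at positive-stripe boundaries, then for each such sweep insert the points stripe by stripe in $x$-order into an MCS Splay tree and bound the cost via the Dynamic Finger Property, telescoping the rank gaps within each stripe and applying concavity to obtain the entropy term. The only cosmetic difference is that you also restrict the query points to the $O(\delta)$ candidate bottom edges, whereas the paper is content to note that the top restriction alone already gives the $O(\delta)$ factor; neither choice affects the stated bound.
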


\begin{LONG}
  The gradient analysis of images where only a central figure
  generates positive points yields instances where negative points
  cluster into stripes.
  More generally, any instance with an imbalance between the number of
  positive and negative points will naturally generate such
  clusters. This is captured in the next corollary.

\stmt{corr}{cor:adv-f()}
  For any $p\in[0..n]$, in the worst case over instances composed of
  $p$ positive points and $n{-}p$ negative points,
  there exists an algorithm that finds an $f()$-optimal box in
  $O(n\delta\lg(\delta+1))$ score compositions, where $\delta \le
  1+2\min(p,n-p)$.
}

  In particular, this last corollary implies an algorithm running in
  $O(pn\lg p)$ score compositions for the \pb{Optimal Planar Box} when
  $p\ge 1$ points have positive (resp. negative) weight.

\end{LONG}

\section{Taking Advantage of Point Alignments} 
\label{sec:taking-advant-relat}

Running the algorithm outlined in the first paragraph of
Section~\ref{sec:taking-advantage-f} over the MCS Splay tree has
further consequences.
In this section we show how it makes the algorithm adaptive to local
point alignments.

Imagine that, once the points $p_1, p_2, \ldots, p_n$ have been sorted
by $y$-coordinate, they also become sorted by $x$-coordinate.
By hypothesis, no pair of points are aligned to any axis, so that the
second order is a permutation of the first one.
Call $\pi$ the permutation that rearranges the points $p_1, p_2,
\ldots, p_n$ from an ordering sorted by $x$-coordinates to an ordering
by $y$-coordinates.

Then, when we insert the points in the MCS Splay tree, the distance
from each insertion to the previous one is equal to 1, and therefore
the overall cost of the algorithm is $O(n^2)$ score composition
(recall the description of Cort\'es algorithm in
Section~\ref{sec:optim-boxes-relat}).

\providecommand{\insertionCostComplement}{\ensuremath{r}}

More generally, assume that the $x$-coordinate $x_j$ of $p_j$ falls at
position $\insertionCostComplement_j$ in the sorted set of previous $x$-coordinates $\{x_1,
x_2, \ldots, x_{j-1}\}$.
Then we have $d_1=0$ and $d_j = |\insertionCostComplement_j - \insertionCostComplement_{j-1}|$ for $j>1$,
according to Lemma~\ref{lemma:mcs-splay-tree}.
Let us define $\lambda = \sum_{j=2}^n |\insertionCostComplement_j-\insertionCostComplement_{j-1}|$ as the {\em
  local insertion complexity} of the sequence
\cite{measuresOfPresortednessAndOptimalSortingAlgorithms}.
Note that a simple upper bound is $\lambda \le \sum_{j=2}^n |\pi_j-\pi_{j-1}|$, as the latter refers to the final positions of the elements, whereas $\lambda$ refers to the moves needed in the current prefix of the permutation.

The cost of our algorithm using the MCS Splay tree can then be upper bounded as follows.
When we insert the points in the MCS Splay tree starting from $p_1$,
the total cost is $O(n + \sum_{j=1}^n \lg(d_j+1)) \subset O(n +
n\lg(1+\lambda/n))$ score compositions, by convexity of the logarithm
and because $\sum_{j=1}^n d_j+1 \leq \lambda+n$.
A simple upper bound when considering all the $n$ passes of the
algorithm is obtained as follows.

\begin{lemma}\label{lem:lambda}
  Let $P$ be a set of $n$ points in the plane.
  Let $f()$ be a monotone decomposable score function receiving as
  argument any subset of $P$.
  There exists an algorithm that finds an $f()$-optimal box in
  $O(n^2(1+\lg(1+\lambda/n)))$ score compositions and
  $O(n^2(1+\lg(1+\lambda/n))+n\lg n)$ coordinate comparisons, where
  $\lambda \le n^2$ is the local insertion complexity of the sequence
  of $x$-coordinates of the points sorted by $y$-coordinates.
\end{lemma}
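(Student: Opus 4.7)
The plan is to execute Cort\'es et al.'s sweep from Section~\ref{sec:optim-boxes-relat}, replacing the static balanced MCS tree with the MCS Splay tree of Lemma~\ref{lemma:mcs-splay-tree}. After an initial $O(n\lg n)$-comparison sort of the points by $y$-coordinate, the algorithm performs $n$ passes: in pass $i$, it starts with an empty MCS Splay tree keyed on $x$-coordinates and inserts $p_i, p_{i+1}, \ldots, p_n$ one at a time, assigning the score $f(p_j)$ to each new leaf and reading from the root the $f()$-optimal consecutive $x$-subrange, which encodes the best box with $y$-range $[y_i, y_j]$. The overall output is the best candidate seen across all passes.

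By the Dynamic Finger Property of Lemma~\ref{lemma:mcs-splay-tree}, pass $i$ costs $O((n-i+1) + \sum_{j > i} \lg(d_j^{(i)}+1))$ amortized score compositions and coordinate comparisons, where $d_j^{(i)}$ is the rank distance between $x_j$ and $x_{j-1}$ inside the current tree of pass $i$, i.e., inside $\{x_i, \ldots, x_{j-1}\}$. Applying Jensen's inequality to the concave function $\lg(1+\cdot)$, exactly as in the paragraph preceding the lemma for the full pass $i=1$, bounds the per-pass cost by $O(n + n\lg(1 + \lambda^{(i)}/n))$, where $\lambda^{(i)} := \sum_{j > i} d_j^{(i)}$ is the local insertion complexity of the suffix $(x_i, x_{i+1}, \ldots, x_n)$ taken as a sequence of its own.

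The technical core is then the monotonicity $\lambda^{(i)} \le \lambda$ for every $i$, which reduces by induction to $\lambda^{(i+1)} \le \lambda^{(i)}$. I would prove this last inequality via the algebraic identity $r_j^{(i+1)} = r_j^{(i)} - [x_i < x_j]$: each signed consecutive rank-difference $b_j^{(i)} := r_j^{(i)} - r_{j-1}^{(i)}$ is shifted by $\varepsilon_j := [x_i < x_j] - [x_i < x_{j-1}] \in \{-1,0,+1\}$ on passing from pass $i$ to pass $i+1$, and the first rank-difference $|b_{i+1}^{(i)}|$ of pass $i$ disappears entirely in pass $i+1$ (where $d_1^{(i+1)} = 0$ by convention). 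The key structural fact, obtained by a short case analysis on the common reference prefix $\{x_i, \ldots, x_{j-2}\}$, is that whenever $\varepsilon_j \neq 0$ the element $x_i$ itself sits strictly between $x_{j-1}$ and $x_j$ in value, hence $x_i$ contributes to $|b_j^{(i)}|$ and forces $b_j^{(i)}$ and $\varepsilon_j$ to have the same sign; consequently $|b_j^{(i)} - \varepsilon_j| = |b_j^{(i)}| - 1 \le |b_j^{(i)}|$, and summing these non-negative corrections together with the dropped term yields $\lambda^{(i+1)} \le \lambda^{(i)}$. With monotonicity in hand, every pass costs $O(n(1+\lg(1+\lambda/n)))$ score compositions and comparisons; summing over the $n$ passes and adding the initial sort delivers the two stated bounds.

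The main obstacle is exactly this sign-alignment step. A naive triangle-inequality approach would deliver only $\lambda^{(i+1)} \le \lambda^{(i)} + O(n-i)$, because the unsigned sum $\sum_j |\varepsilon_j|$ is just the number of times the suffix $(x_{i+1}, \ldots, x_n)$ crosses the threshold value $x_i$ and can be linear in $n$; such additive slack, compounded across the $n$ passes, would inflate each $\lg(1+\lambda^{(i)}/n)$ by up to $O(\lg n)$ and blow the global bound up by a spurious $O(n^2 \lg n)$ factor, missing the adaptive target. The tighter signed bound is made possible by the observation that when $x_{j-1}$ and $x_j$ straddle $x_i$, the removed element $x_i$ already lies between them inside the common reference prefix, so the shift of the rank-difference is absorbed by an existing contribution rather than creating a new one.
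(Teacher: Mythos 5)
Your proposal follows essentially the same route the paper uses (implicitly, since the lemma is stated without a separate proof): run the $n$ passes of Cort\'es et al.'s sweep on the MCS Splay tree of Lemma~\ref{lemma:mcs-splay-tree}, bound each pass via the Dynamic Finger Property and Jensen's inequality, and multiply by $n$. The one step the paper glosses over --- that the local insertion complexity $\lambda^{(i)}$ of the suffix handled in pass $i$ never exceeds $\lambda$ --- is precisely what your signed rank-difference argument supplies, and it is correct (a non-inductive variant also works, comparing $d_j^{(i)}$ to $d_j^{(1)}$ directly by observing that every removed prefix element lying strictly between $x_{j-1}$ and $x_j$ in value already contributes, with matching sign, to the pass-$1$ rank difference).
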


In the worst case this boils down to the $O(n^2\lg n)$-worst-case
algorithm, whereas in the best case $\lambda=0$ and the cost
corresponds to $O(n^2)$ operations.

We can upper bound $\lambda$ by using other measures of disorder in permutations.
For example, let us consider $\nInv$, the number of {\em inversions}
in the permutation $\pi$, or said another way, the number of pairs out
of order in the sequence \cite{anOverviewOfAdaptiveSorting}.
The measure $\nInv$ corresponds to a cost where the ``finger'' is always
at the end of the sequence.
This can be as small as $(\lambda-n)/2$, for example consider the
permutation $\pi = (m,m-1,m+1,m-2,m+2,\ldots,1,2m-1)$ for $m=(n+1)/2$
and
odd $n$.
However, $\nInv$ can be much larger than $\lambda$ because it is not
symmetric on decreasing sequences, for example when the points are
semi-aligned in a decreasing diagonal and the permutation is
$\pi=(n,n-1,n-2,\ldots,1)$.
Thus replacing $\lambda$ by $\nInv$ in Lemma~\ref{lem:lambda} yields a
valid upper bound in terms of big-O complexity.

\begin{lemma}{lemma:relacionInvLambda}
For any permutation $\pi$ of
  local insertion complexity $\lambda$ and presenting $\nInv$
  inversions,
 $$\nInv \ge \lambda/2-n.$$
\end{lemma}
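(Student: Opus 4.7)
The plan is to relate $|r_j - r_{j-1}|$ directly to inversions involving the two adjacent entries $x_{j-1}$ and $x_j$. Defining $r_j = 1 + |\{i < j : x_i < x_j\}|$ (the insertion position of $x_j$ into the previously seen prefix), I would split into two cases according to whether step $j$ is an \emph{ascent} ($x_{j-1} < x_j$) or a \emph{descent} ($x_{j-1} > x_j$). A direct count gives $r_j - r_{j-1} = 1 + B_j$ in the ascent case, with $B_j = |\{i \le j-2 : x_{j-1} < x_i < x_j\}|$, and $r_{j-1} - r_j = B_j$ in the descent case, with $B_j = |\{i \le j-2 : x_j < x_i < x_{j-1}\}|$. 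Summing and observing that there are at most $n-1$ ascent steps yields $\lambda = \sum_{j=2}^n |r_j - r_{j-1}| \le (n-1) + \sum_{j=2}^n B_j$.

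Next, I would charge $\sum_j B_j$ against the inversions of $\pi$. Each index $i$ counted by $B_j$ produces a genuine inversion: in the ascent case $(i, j-1)$ is inverted since $x_i > x_{j-1}$ and $i < j-1$; in the descent case $(i, j)$ is inverted since $x_i > x_j$ and $i < j$. The crucial observation is that a given inversion $(a,b)$ with $a < b$ arises in this way for at most two values of $j$, because its larger index $b$ must equal either $j-1$ (forcing $j = b+1$, on the ascent side) or $j$ (forcing $j = b$, on the descent side). Hence $\sum_j B_j \le 2\nInv$, so $\lambda \le (n-1) + 2\nInv$, which rearranges to $\nInv \ge (\lambda - n + 1)/2 \ge \lambda/2 - n$, as claimed.

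The main subtlety I expect to double-check is the boundary case of adjacent inversions $a = b-1$, where the constraint $i \le j-2$ rules out the descent-side charging at $j = b$ and leaves only the ascent-side charging at $j = b+1$; this is consistent with the at-most-two multiplicity bound and only strengthens the inequality. The argument is otherwise a purely combinatorial double-counting, so I do not anticipate any significant further obstacles.
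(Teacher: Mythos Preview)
Your proof is correct and follows essentially the same approach as the paper: both arguments bound each term $|r_j-r_{j-1}|$ using inversions whose larger index is $j-1$ or $j$, and then observe that summing double-counts each inversion at most twice, giving $\lambda \le 2\,\nInv + n$. The only difference is presentational: the paper works with the aggregate inversion counts $j-r_j$ (noting that each such count is used for at most two consecutive values of $j$), whereas you compute $|r_j-r_{j-1}|$ exactly as $B_j$ plus an ascent indicator and charge the elements of $B_j$ to individual inversions; the underlying double-counting is identical.
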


\begin{proof}
  For each $x_j$, $\lambda$ increases by $|\pi_j-\pi_{j-1}|$, whereas $\nInv$ increases by $j-\pi_j$. Since $\pi_j \le j$, if $\pi_j > \pi_{j-1}$, we can use that $(j-1)-\pi_{j-1}+1 \ge \pi_j-\pi_{j-1}$, or else we use $j-\pi_j \ge \pi_{j-1}-\pi_j$.
  So either the contribution of $x_{j-1}$ plus 1, or that of $x_j$, to $\nInv$, upper bounds the contribution of $x_j$ to $\lambda$.
  By summing up both contributions (that of $x_{j-1}$ plus 1, and that of $x_j$) for each $j$ we have $2\cdot\nInv+n$ and this upper bounds $\lambda$.
  See also Estivill-Castro and Wood's survey~\cite{estivillcastro92survey}. 
\end{proof} 

Another well-known measure of permutation complexity is the number of {\em increasing runs} $\rho$, that is, the minimum number of contiguous monotone increasing subsequences that cover $\pi$ \cite{theArtOfComputerProgramming}.
Let $r_1,\ldots,r_\rho$ be the lengths of the runs, computed in $O(n)$ comparisons.
Then the sum of the values $|\pi_{j+1}-\pi_j|$ within the $i$-th run
telescopes to at most $n$, and so does the sum of the $d_j$ values.
Therefore 
$\sum_{j=1}^n \lg(d_j+1) 
\le \sum_{i=1}^\rho r_i\lg(1+n/r_i) 
\le n + \sum_{i=1}^\rho r_i\lg(n/r_i)$ by convexity.
This leads to the following alternative upper bound.

\begin{lemma}\label{lemma:rho}
  Let $P$ be a set of $n$ points in the plane.
  Let $f()$ be a monotone decomposable function receiving as argument
  any subset of $P$.
  There exists an algorithm that finds an $f()$-optimal box in
  $O(n\lg n)$ coordinate comparison and
  $O(n^2(1+\entropy(r_1,\ldots,r_\rho)) \subset O(n^2\lg(\rho+1))$ score
  compositions, where
  $r_1,\ldots,r_\rho$ are the lengths of $\rho$ maximal contiguous
  increasing subsequences that cover the sequence of $x$-coordinates of
  the points sorted by $y$-coordinate.
\end{lemma}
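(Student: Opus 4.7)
The plan is to reuse the algorithmic skeleton from Section~\ref{sec:optim-boxes-relat} (Cort\'es et al.'s outer loop on the bottom $y$-coordinate) but to replace the static balanced tree by the MCS Splay tree of Lemma~\ref{lemma:mcs-splay-tree}, so that we can invoke the Dynamic Finger Property. I first spend $O(n\lg n)$ coordinate comparisons to sort the points by $y$-coordinate once and for all, producing the sequence $p_1,\dots,p_n$, and also to identify the maximal increasing runs $r_1,\dots,r_\rho$ of their $x$-coordinates in an additional $O(n)$ comparisons. After this preprocessing we never do another coordinate comparison beyond those absorbed into splay-tree navigation, whose cost is already accounted for inside the $O(\lg n)$-amortized bounds of Lemma~\ref{lemma:mcs-splay-tree}.

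Next I describe one outer pass, parameterized by an index $i$: initialize an empty MCS Splay tree and insert $p_i, p_{i+1}, \ldots, p_n$ into it in this order, keyed by $x$-coordinate, reading off the $f()$-optimal box with bottom at $p_i$ from the $M()$ interval stored at the root after each insertion. By Lemma~\ref{lemma:mcs-splay-tree} the amortized cost of this pass is $O(n + \sum_{j=i+1}^n \lg(d_j+1))$ score compositions, where $d_j$ is the rank-distance between the $x$-coordinate of $p_j$ and that of $p_{j-1}$ inside the current tree.

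The main quantitative step is bounding $\sum_{j=i+1}^n \lg(d_j+1)$ by $O(n(1+\entropy(r_1,\dots,r_\rho)))$ independent of~$i$. Here I use the runs decomposition: I group the indices $j$ according to which maximal increasing run of $x$-coordinates they belong to in the original sequence. Within a single run of length~$r_k$, the successive $x$-coordinates are increasing, so each rank-distance $d_j$ is bounded by the gap between the current insertion position and the previous one in the tree, and these gaps telescope to at most $n$ across the run (even though intervening points from other runs may be present in the tree, the monotonicity of the $x$-coordinates within the run keeps the successive finger positions monotone). By Jensen's inequality applied to the concave function $\lg(1+x)$, the contribution of run $k$ is at most $r_k\lg(1+n/r_k)$, and summing gives $\sum_{k=1}^\rho r_k\lg(1+n/r_k) \le n+\sum_{k=1}^\rho r_k\lg(n/r_k) = n(1+\entropy(r_1,\dots,r_\rho))$. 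The subtle point I must be careful about is that this telescoping really does hold inside the partially-built tree of pass $i$, not just in the fully-built tree; this is the main obstacle I foresee, but it follows because the finger-distance argument of Lemma~\ref{lemma:mcs-splay-tree} only compares ranks, and the relative order of any two $x$-coordinates already inserted is preserved.

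Finally, summing over the $n$ outer passes yields $O(n^2(1+\entropy(r_1,\dots,r_\rho)))$ score compositions, with the weaker bound $O(n^2\lg(\rho+1))$ following from $\entropy(r_1,\ldots,r_\rho)\le \lg\rho$. The coordinate-comparison cost is dominated by the initial sort at $O(n\lg n)$ plus the splay navigation inside each pass, which is already charged as comparisons in Lemma~\ref{lemma:mcs-splay-tree} but only contributes lower-order terms after summing. This establishes the claimed bounds.
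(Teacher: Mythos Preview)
Your approach is essentially the paper's own argument: the text immediately preceding Lemma~\ref{lemma:rho} carries out exactly this telescoping-within-runs followed by the convexity bound $\sum_k r_k\lg(1+n/r_k)\le n(1+\entropy(r_1,\dots,r_\rho))$, applied on top of the MCS Splay tree machinery of Lemma~\ref{lemma:mcs-splay-tree}. The per-pass bound and the multiplication by $n$ outer passes are identical.

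One point to tighten: your final sentence asserting that the splay navigation ``only contributes lower-order terms after summing'' to the coordinate-comparison count is not right as stated. Across $n$ passes the navigation cost is $\Theta(n^2(1+\entropy(r_1,\dots,r_\rho)))$, which is certainly not $O(n\lg n)$. To meet the $O(n\lg n)$ coordinate-comparison bound claimed in the lemma you must precompute the $x$-ranks $\pi$ once (during the initial sort) and then key the Splay tree by those integer ranks rather than by raw $x$-coordinates; the per-pass work then consists of rank comparisons and score compositions only, with no further coordinate comparisons. This is a small implementation detail, but without it the coordinate-comparison bound in the statement is not achieved.
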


\begin{LONG}
Of course we can consider decreasing instead of increasing runs.
\end{LONG}

\section{Taking Advantage of both Stripes and Alignments}
\label{sec:final-algorithm}

The combination of the techniques of
Sections~\ref{sec:taking-advantage-f} and
\ref{sec:taking-advant-relat} can be elegantly analyzed.
A simple result is that we need to start only from $\delta$ different
$p_i$ values, and therefore an upper bound to our complexity is
$O(n\delta((1+\lg(1+\lambda/n)))$. 
We can indeed do slightly better by sorting the points by increasing
$x$-coordinates within each monochromatic stripe.
While the measure $\lambda'$ resulting from this reordering may be larger
than $\lambda$, the upper bounds related to $\nInv$ and $\rho$, namely
$\nInv'$, $\rho'$, and $\entropy(n_1',\ldots,n_{\rho'}')$, do not increase.
In particular it is easy to see that the upper bound of
Theorem~\ref{teo:adv-f()} is dominated by the combination since $\rho'
\le \delta$ and $\entropy(r_1',\ldots,r_{\rho'}') \le
\entropy(n_1,\ldots,n_\delta)$ (because no run will cut a
monochromatic stripe once the latter is reordered).

\begin{theorem}\label{teo:rel-pos}
  Let $P$ be a set of $n$ points in the plane.
  Let $f()$ be a monotone decomposable function receiving as argument
  any subset of $P$.
  There exists an algorithm that finds an $f()$-optimal box in
  $O(n\lg n)$ coordinate comparisons and
  $O(n\delta(1+\min(\lg(1+\nInv/n),\entropy(r_1,\ldots,r_\rho))))
  \subset O(n\delta\lg(\rho+1))$ score compositions, where
  $\delta$ is the minimum number of monochromatic stripes in which the
  points, sorted by increasing $y$-coordinate, can be partitioned;
  $X$ is the corresponding sequence of $x$-coordinates once we
  (re-)sort by increasing $x$-coordinate the points within each
  monochromatic stripe;
  $\nInv \le n^2$ is the number of out-of-order pairs in $X$; and
  $r_1,\ldots,r_\rho$ are the lengths of the minimum number
  $\rho\le\delta$ of contiguous increasing runs that cover $X$.
  A similar result holds by exchanging $x$ and $y$ axes.
\end{theorem}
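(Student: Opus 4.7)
The plan is to combine the restriction of the outer loop to the $\delta$ stripe boundaries of Section~\ref{sec:taking-advantage-f} with the MCS Splay tree insertion strategy of Section~\ref{sec:taking-advant-relat}. Concretely, I would first sort the points by increasing $y$-coordinate, group them into the $\delta$ maximal monochromatic stripes, and within each stripe re-sort the points by increasing $x$-coordinate; the total cost of this preprocessing is $O(n\lg n) + O(\sum_i n_i \lg n_i) \subseteq O(n\lg n)$ coordinate comparisons, after which all positions are known by rank. The main loop then performs only $\delta$ outer iterations, one per potential bottom boundary of an $f()$-optimal box (as justified in Section~\ref{sec:taking-advantage-f}), each of which inserts the $n$ points into an initially empty MCS Splay tree (Lemma~\ref{lemma:mcs-splay-tree}) in the order prescribed by the re-sorted sequence $X$, recording the best box reported by the tree.

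The analysis of a single outer iteration follows the arguments of Section~\ref{sec:taking-advant-relat}. The Dynamic Finger Property bounds its cost by $O(n + \sum_{j=1}^n \lg(d_j+1))$ score compositions, where $d_j$ is the rank-distance between consecutive insertions in $X$. Convexity of the logarithm together with the telescoping $\sum_j (d_j + 1) \le \lambda + n$ convert this into $O(n(1+\lg(1+\lambda/n)))$, which Lemma~\ref{lemma:relacionInvLambda} in turn bounds by $O(n(1+\lg(1+\nInv/n)))$ since $\lambda \le 2\nInv + 2n$. Alternatively, partitioning the sum by the $\rho$ maximal increasing runs of $X$ makes the $d_j$ values telescope within each run to at most $n$, yielding $O(n(1+\entropy(r_1,\ldots,r_\rho)))$ by convexity, exactly as in Lemma~\ref{lemma:rho}. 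Taking the minimum of the two bounds and multiplying by the $\delta$ outer iterations gives the claimed score composition complexity.

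The step I expect to be the main obstacle is justifying the structural bound $\rho \le \delta$ and the corresponding entropy inequality that makes Theorem~\ref{teo:rel-pos} subsume Theorem~\ref{teo:adv-f()}. After the intra-stripe re-sorting, each monochromatic stripe is internally increasing in $x$, so no maximal increasing run of $X$ can end strictly inside a stripe. Therefore each run is exactly a union of one or more consecutive stripes; this immediately gives $\rho \le \delta$, and iterated application of the log-sum inequality to each pair of merged stripes gives $\entropy(r_1,\ldots,r_\rho) \le \entropy(n_1,\ldots,n_\delta)$, so that the bound of Theorem~\ref{teo:adv-f()} is indeed dominated. The symmetric statement with the roles of $x$ and $y$ exchanged follows by rerunning the argument on the reflected configuration.
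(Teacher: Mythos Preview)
Your proposal is correct and follows essentially the same approach as the paper: restrict the outer loop to the $\delta$ stripe boundaries (Section~\ref{sec:taking-advantage-f}), re-sort each stripe by $x$-coordinate, run the MCS Splay tree pass of Section~\ref{sec:taking-advant-relat} on the resulting sequence $X$, and bound each pass via the $\nInv$ and run-entropy arguments of Lemmas~\ref{lemma:relacionInvLambda} and~\ref{lemma:rho}. Your justification of $\rho\le\delta$ and of $\entropy(r_1,\ldots,r_\rho)\le\entropy(n_1,\ldots,n_\delta)$ via the log-sum inequality is in fact more explicit than the paper's one-line remark that ``no run will cut a monochromatic stripe once the latter is reordered.''
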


Note that if these new measures are not particularly favorable, the formula
boils down to the $O(n\delta\lg\delta)$ time
complexity of Section~\ref{sec:taking-advantage-f}.

\section{Taking Advantage of Diagonals of Blocks}
\label{sec:diagonalization}

In this section we take advantage of the relative positions of the
points to obtain a different adaptive algorithm.
We will consider partitions of the point set into two subsets.
These partitions are induced by two lines which are parallel to the
axes and perpendicular each other.
A combination of optimal boxes of each of the subsets will lead to the
optimal box of the whole point set.

For any subset $A\subseteq P$, a {\em diagonalization} of $A$ is 
a partition $\{A_1,A_2\}$ of $A$ induced by two lines $\ell_1$ and
$\ell_2$, respectively parallel to axes $x$ and $y$,
so that the elements of $A_1$ and the elements of $A_2$ belong to
opposite quadrants with respect to the point $\ell_1\cap \ell_2$.
%
  Figure~\ref{fig:diagonalization} gives some example of diagonalization.
  In particular, assuming that the points of $A_1$ are to the left of
  the points of $A_2$, we call the diagonalization {\em bottom-up}
  (Figure~\ref{fig:diagonalization}a) if the elements of $A_1$ are
  all below the elements of $A_2$ and {\em top-down}
  (Figure~\ref{fig:diagonalization}b) otherwise.
%
Note that if $p_1,p_2,\ldots,p_m$ denote the elements of $A$ sorted by
$x$-coordinate, then any diagonalization of $A$ has the form
$\{\{p_1,\ldots,p_k\},\{p_{k+1},\ldots,p_m\}\}$ for some index
$k\in[1..m-1]$.

  \begin{figure}[h]
    \centering
    \includegraphics[width=9cm]{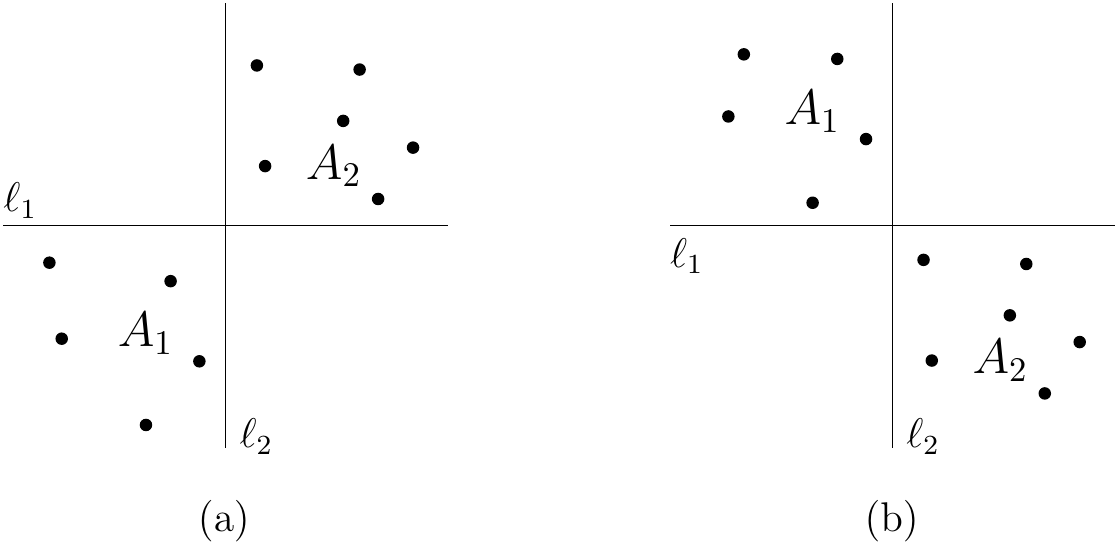}
    \caption{\small{A diagonalization $\{A_1,A_2\}$ of the point set
        $A=A_1\cup A_2$.}}
    \label{fig:diagonalization}
  \end{figure}

Given any bounded set $S\subset\mathbb{R}^2$, let $\bbox(S)$ denote
the smallest enclosing box of $S$ and let the {\em extreme} points of
$A$ be those belonging to the boundary of $\bbox(A)$.

Not all point sets can be decomposed into diagonals, the simplest case
being a set of four points placed at the four corners of a square
which sides are slightly rotated from the axes $x$ and $y$.
We call such a point set a \emph{windmill}, for the characteristic
position of its extreme points.
  See Figure~\ref{fig:windmills} for an example.

\begin{figure}[h]
  \centering
  \includegraphics[width=7cm]{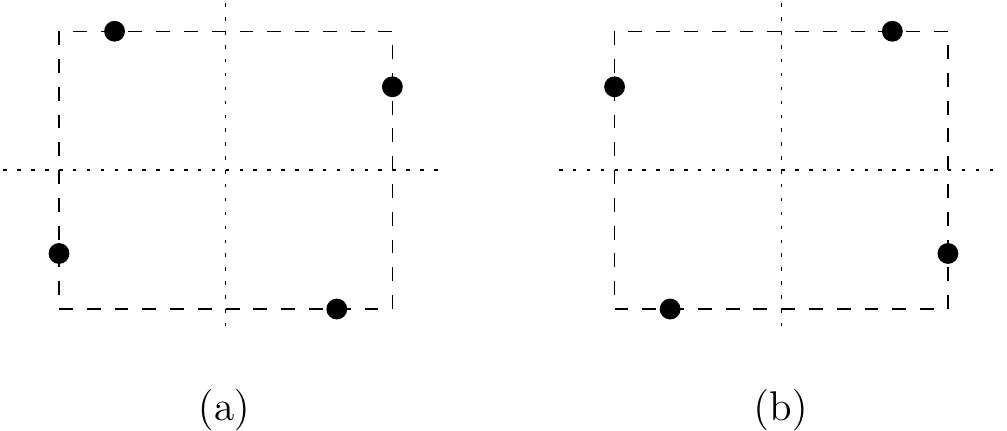}
  \caption{\small{Windmills.}}
  \label{fig:windmills}
\end{figure}

\begin{lemma}\label{lem:blocking-windmill}
  Let $A$ be a point set that does not admit a diagonalization. Then
  $A$ has exactly four extreme points. Furthermore, $A$ has a windmill
  which contains at least one extreme point of $A$.
\end{lemma}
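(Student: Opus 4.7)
The plan is to prove the two assertions of the lemma in turn. For the first (exactly four extreme points), I argue by contrapositive: if any two of the unique leftmost $L$, bottommost $B$, topmost $T$, rightmost $R$ coincide, then $A$ admits a diagonalization. For example, $L=B$ places $L$ at the bottom-left corner of $\bbox(A)$, so every other point of $A$ strictly dominates $L$ in both coordinates; the partition $\{\{L\},A\setminus\{L\}\}$ induced by a horizontal line immediately above and a vertical line immediately to the right of $L$ is then a valid bottom-up diagonalization, contradicting the hypothesis. The remaining three coincidences $L=T$, $R=B$, $R=T$ yield analogous diagonalizations by symmetry, so $L,B,T,R$ are four distinct points and are the only extreme points of $A$.

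For the second assertion, I invoke the $x\mapsto-x$ and $y\mapsto-y$ symmetries to assume $x_B<x_T$ and $y_L<y_R$. In the complementary configuration $x_B<x_T$ with $y_L>y_R$, the set $\{L,B,T,R\}$ is itself the required windmill: its sorted-by-$x$ order is $L,B,T,R$, the subset has four distinct extreme points, and enumerating the three possible $x$-splits shows that no diagonalization of the subset is possible. Under the working assumption $y_L<y_R$, however, the four extreme points \emph{do} admit the bottom-up diagonalization at the $x$-rank $a$ of $B$, which $A$ fails to admit; this failure produces an inversion $i\le a<j$ with $y_{p_i}>y_{p_j}$ in the $x$-sorted sequence of $A$.

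I would then split into three cases. \emph{Case $\alpha$}: some $p\in A$ has $x_L<x_p<x_B$ and $y_p>y_R$; then $\{L,p,B,R\}$ has $L$, $R$, $p$, $B$ on the four sides of its bounding box (four distinct extreme points), $x$-sorts as $L,p,B,R$, and one checks via the three $x$-splits that no diagonalization of the subset exists, so it is a windmill containing the extremes $L,B,R$ of $A$. \emph{Case $\beta$} (symmetric): some $p$ with $x_T<x_p<x_R$ and $y_p<y_L$ yields the windmill $\{L,T,p,R\}$. \emph{Case $\gamma$}: neither $\alpha$ nor $\beta$ occurs, which forces every point of $A$ with $y>y_R$ to satisfy $x\ge x_B$ and every point with $y<y_L$ to satisfy $x\le x_T$. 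In Case $\gamma$, using the bottom-up failures at $k=a$ and, if required, at $k=b-1$ (where $b$ is the $x$-rank of $T$), I would extract either (i) a ``mid-high'' left obstruction $p'$ with $x_{p'}\in(x_L,x_B)$, $y_{p'}\in(y_L,y_R)$, together with a ``right partner'' $p''$ with $x_{p''}>x_B$ and $y_{p''}\in(y_L,y_{p'})$, yielding the windmill $\{L,p',B,p''\}$; or (ii) two middle obstructions $p',p''$ with $x_B<x_{p'}<x_{p''}<x_T$, $y_{p'}>y_R$ and $y_{p''}<y_L$, yielding the windmill $\{L,p',p'',R\}$. In each configuration the chosen four-point subset has four distinct extremes containing at least one of $L,B,T,R$, and its non-diagonalizability is verified by enumerating the three $x$-splits of the subset.

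The main obstacle is the sub-case analysis inside Case~$\gamma$: I must verify that the combined constraints ``no $\alpha$- or $\beta$-obstruction'' and ``$A$ admits no diagonalization'' force the available obstructions into exactly one of the two configurations (i) or (ii). Particular care is needed in the borderline situations where the argmax of the left $y$-values at $k=a$ equals $y_L$ (so no mid-high left obstruction arises at $k=a$) or the argmin of the right $y$-values at $k=a$ equals $y_R$; in those cases the missing obstruction must be supplied by the failure at $k=b-1$, and one must then use the Case~$\gamma$ restrictions on the high and low sets to show that this second obstruction lies in the correct strip and $y$-range for the chosen four-point subset to be a windmill.
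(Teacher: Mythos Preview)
Your argument for the first assertion is correct and is exactly what the paper has in mind when it calls this part ``easy to show.''

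For the second assertion your strategy coincides with the paper's: both start from the four extreme points, dispose of the case where they already form a windmill, and otherwise exploit the failure of diagonalizations to locate auxiliary points that complete a windmill containing at least one extreme. Your Cases~$\alpha$ and~$\beta$ correspond to the paper's first two region checks (its regions $R_1$ and $R_2$): if either region is nonempty one replaces a single extreme point and obtains a windmill with three extremes, exactly as you do.

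The divergence is in Case~$\gamma$, which you rightly flag as the obstacle. The paper's resolution of this branch is somewhat different from your proposed configurations (i) and (ii), and the difference matters. After $R_1\cup R_2$ is found empty, the paper argues that non-diagonalizability forces a point into one of two further regions $R_3,R_4$; it then chooses an \emph{extremal} point there (the \emph{rightmost} point $t$ of $R_3$, later the \emph{bottommost} point $t'$ of $R_4$). These extremal choices buy additional emptiness constraints on freshly created sub-regions $R_5,\ldots,R_8$, which is precisely what lets the nested case analysis terminate cleanly. In the deepest branch the resulting windmill uses \emph{three} auxiliary (non-extreme) points and only one extreme point of $A$ (the sets $\{p,t,u,t'\}$ or $\{t,u',s,t'\}$). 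Your configurations (i) and (ii) each use only two auxiliary points and two extremes; this is why you are finding the sub-case analysis hard to close. Concretely, when there is no point with $x\in(x_L,x_B)$ and $y\in(y_L,y_R)$ (so (i) is unavailable) and the obstructions at $k=a$ and $k=b-1$ fail to produce a matching high/low pair inside the strip $(x_B,x_T)$ with the required $x$-order (so (ii) is unavailable), your two templates do not cover the situation, whereas the paper's extra level of region refinement does.

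In short: your plan is on the right track and matches the paper up to the point you identified as problematic; to finish, adopt the paper's device of picking extremal obstruction points to generate new empty regions, and be prepared for windmills that retain only one extreme point of $A$.
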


\begin{proof}
The first part of the lemma is easy to show. We proceed to prove the second part.
Let $\{p,q,r,s\}$ be the extreme points of $A$. If they form a windmill
then we are done. Otherwise, assume without loss of generality that their relative
positions are as depicted in Figure~\ref{fig:not-diag-implies-windmill}a. 
\begin{figure}[h]
    \centering
    \includegraphics[width=10cm]{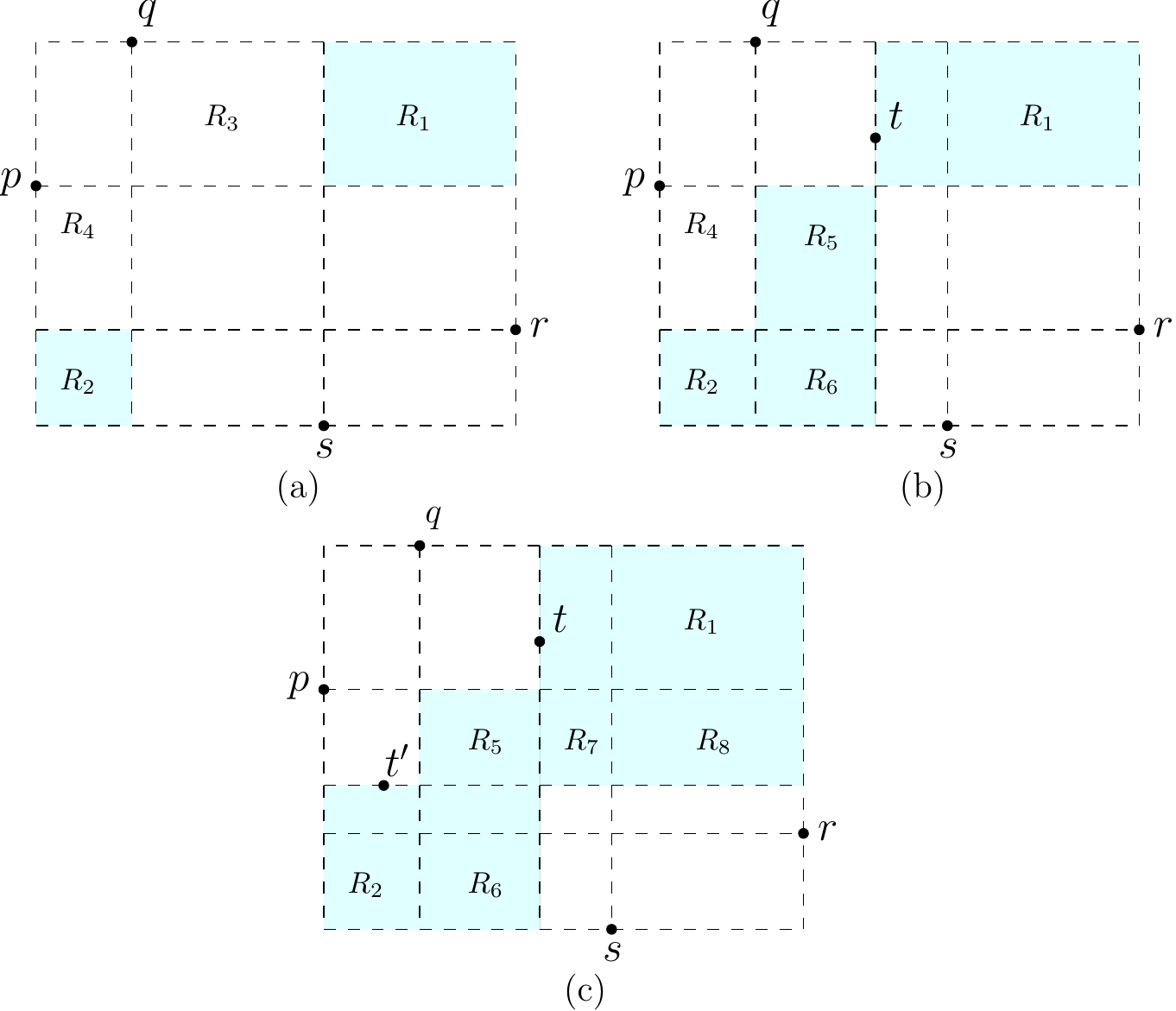}
    \caption{\small{Proof of Lemma~\ref{lem:blocking-windmill}. In each figure
    the shaded area is empty of points of set $A$.}}
    \label{fig:not-diag-implies-windmill}
\end{figure}
If there is an
element $q'$ of $A$ in region $R_1$ then $\{p,q',r,s\}$ is a windmill and we are
done. Analogously,
if there is a point $s'$ of $A$ in region $R_2$ then $\{p,q,r,s'\}$ is a windmill
and we are done. Assume then that $R_1\cup R_2$ is empty of elements of $A$. Since
$A$ does not admit a diagonalization then $R_3\cup R_4$ must contain a point of $A$.
Assume without loss of generality that $R_3$ contains a point. Let $t$ be the rightmost
point of $R_3$ creating regions $R_5$ and $R_6$ as shown in
Figure~\ref{fig:not-diag-implies-windmill}b. If there
is a point $u$ in $R_5$ then set $\{p,q,t,u\}$ is a windmill and we are done.
Analogously, if there
is a point $u$ in $R_6$ then set $\{p,t,r,u\}$ is a windmill and we are done. Then assume
$R_5\cup R_6$ is empty of elements of $A$.
Now, since $A$ does not admit a diagonalization, region $R_4$ must contain a point of
$A$. Let then $t'$ be the bottom-most point of $R_4$, which induces the regions $R_6$ and
$R_7$ depicted in
Figure~\ref{fig:not-diag-implies-windmill}c.
We now proceed with $t'$, similar as we did with $t$. 
If region $R_7$ contains a point $u$ of $A$ then $\{p,t,u,t'\}$ is a windmill and
the result follows. Otherwise,
region $R_8$ must contain a point $u'$ of $A$, because in the contrary case $A$ would
have a diagonalization. Then $\{t,u',s,t'\}$ is a windmill. 
Therefore, the result follows.
\end{proof}

\begin{definition}\label{defn:diagonalisationTree}
	A diagonalization tree of $P$,
  $D$-tree, is a binary tree such that: $(i)$ each leaf $u$ contains a
  subset $S(u)\subseteq P$ which does not admit a diagonalization,
  $(ii)$ set $\{S(u)~|~u\text{ is a leaf }\}$ is a partition of $P$,
  and $(iii)$ each internal node $v$ has exactly two children $v_1$
  (the left one) and $v_2$ (the right one) and satisfies that
  $\{A(v_1),A(v_2)\}$ is a diagonalization of $A(v)$, where for each
  node $v$ $A(v)$ denotes the union of the sets $S(u)$ for all leaves
  $u$ descendant of $v$ (See Figure~\ref{fig:diag-tree}).
\end{definition}

\begin{figure}[h]
    \centering
    \includegraphics[width=12cm]{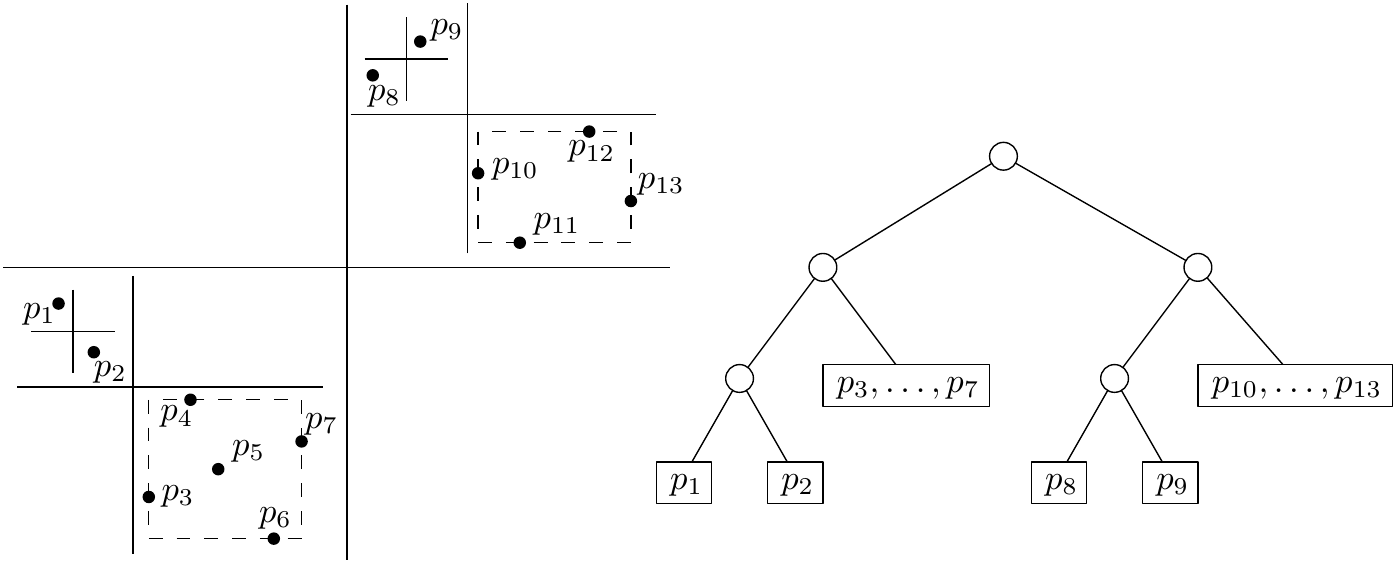}
    \caption{\small{A $D$-tree of the point set $\{p_1,\ldots,p_{13}\}$.}}
    \label{fig:diag-tree}
\end{figure}

\begin{lemma}\label{lem:number-of-nodes-diag-tree}
  Let $P$ be a set of $n$ points in the plane. Every $D$-tree of $P$ has the same
  number of leaves. Furthermore, the $i$-th leaves from left to right of any two
  $D$-trees of $P$ contain the same subset $S(\cdot)$ of $P$.
\end{lemma}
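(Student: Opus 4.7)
The plan is to reduce both parts of the statement to a single uniqueness claim: for any point set $P$, every $D$-tree of $P$ induces the same partition of $P$ at its leaves. I would first observe, by induction on the tree, that each leaf set $S(u)$ is a contiguous interval $\{p_l,p_{l+1},\ldots,p_r\}$ in the $x$-sorted order of $P$: the root's diagonalization separates $P$ along the vertical line $\ell_2$, and recursion on the two children propagates this property to all descendants, as already noted just after the definition of diagonalization. Therefore the partition induced by a $D$-tree is entirely described by the set of cut indices between consecutive leaves, and the $i$-th leaf from the left is necessarily the $i$-th block of that partition; hence proving uniqueness of the partition yields both conclusions of the lemma at once.

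For the uniqueness, I would proceed by induction on $|P|$. The base case $|P|=1$ is immediate. For the inductive step, consider two $D$-trees $T$ and $T'$ of $P$, splitting their roots at indices $k$ and $k'$ respectively; the case $k=k'$ follows directly by applying the inductive hypothesis to the two sub-instances at which both trees recurse. The interesting case is $k<k'$, which I would resolve by a confluence argument based on the following monotonicity of $y$-separability: if $\{p_1,\ldots,p_{k'}\}$ and $\{p_{k'+1},\ldots,p_n\}$ are $y$-separable (as witnessed by $T'$), then so are the smaller $\{p_{k+1},\ldots,p_{k'}\}$ and $\{p_{k'+1},\ldots,p_n\}$; symmetrically, $\{p_1,\ldots,p_k\}$ and $\{p_{k+1},\ldots,p_{k'}\}$ are $y$-separable. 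Consequently, I can build an auxiliary $D$-tree $T''$ that splits first at $k$ and then at $k'$ inside its right subtree, and another $D$-tree $T'''$ that splits first at $k'$ and then at $k$ inside its left subtree; both end up splitting $P$ into the same three top-level pieces $\{p_1,\ldots,p_k\}$, $\{p_{k+1},\ldots,p_{k'}\}$, and $\{p_{k'+1},\ldots,p_n\}$, each further handled by an arbitrary $D$-tree of that piece.

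A triangle of three inductive applications then closes the argument. Comparing $T$ with $T''$ reduces to comparing two $D$-trees of the strictly smaller set $\{p_{k+1},\ldots,p_n\}$, so by induction they induce the same partition; comparing $T'$ with $T'''$ is symmetric on $\{p_1,\ldots,p_{k'}\}$; and comparing $T''$ with $T'''$ reduces, piece by piece, to three comparisons of $D$-trees on the three strictly smaller common pieces. Transitivity then yields that $T$ and $T'$ induce the same partition of $P$, completing the induction. The main obstacle I anticipate is checking cleanly that $T''$ and $T'''$ are valid $D$-trees: each of the two successive diagonalizations must be both $x$- and $y$-separated, with the $x$-separation automatic from the sorted order while the $y$-separation requires the monotonicity noted above; once this is confirmed, the triangle of inductive applications is routine.
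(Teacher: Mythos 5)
Your proof is correct, but it takes a genuinely different route from the paper's. The paper's proof also proceeds by induction on $|P|$, but instead of a pairwise argument it classifies all diagonalizations of $P$ up front: it observes that every diagonalization of $P$ is of the same type (bottom-up or top-down), so the $k$ diagonalizations correspond to a linearly ordered family of cut indices that partition $P$ into blocks $P_1,\ldots,P_{k+1}$; it then argues that no $P_i$ admits a diagonalization of that same type, so every $D$-tree is a binary tree whose leaves, in order, are exactly $P_1,\ldots,P_{k+1}$, each replaced by a $D$-tree of $P_i$, and the inductive hypothesis applied to each $P_i$ finishes the proof. Your version is essentially a local-confluence / Newman's-lemma style argument: compare two $D$-trees $T,T'$ with root cuts $k<k'$, build auxiliary trees $T'',T'''$ that perform both cuts in the two possible orders, and close a triangle of inductive comparisons by transitivity. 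Both approaches hinge on the same monotonicity of $y$-separability under shrinking one side of the cut; yours avoids having to establish the "all diagonalizations of $P$ are the same type" and "no $P_i$ admits a same-type diagonalization" sub-claims, at the cost of constructing and verifying the auxiliary trees. One minor omission: your base case should also cover the case where $P$ is nontrivial but admits no diagonalization, in which the $D$-tree is a single leaf and uniqueness is immediate — this is easy, but it is a separate base case from $|P|=1$ and the paper states it explicitly.
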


\begin{proof}
  We use induction on the number of elements of $P$. 
  If $P$ contains only one element, or $P$ does not admit a diagonalization, then we are done.
  Otherwise, assume that $P$ has exactly $k\geq 1$ different diagonalization, all of which must be of the same type (bottom-up or top-down).
  Then $P$ is partitioned into $k+1$ non-empty sets denoted from left to right $P_1,P_2,\ldots,P_{k+1}$ such that the $k$ diagonalizations are $\{\bigcup_{i=1}^jP_i,\bigcup_{i=j+1}^{k+1}P_{i}\}$ for $j=1\ldots k$. 
  Observe then that no set $P_i$ admits a bottom-up diagonalization and that any $D$-tree of $P$ can be obtained by building a binary tree, whose leaves from left to right are $P_1,P_2,\ldots,P_{k+1}$ and each internal node has two children, and replacing every leaf $P_i$ by a $D$-tree of $P_i$. 
  Applying the induction hypothesis for $P_1,P_2,\ldots,P_{k+1}$ the result follows.
\end{proof}

From Lemma~\ref{lem:number-of-nodes-diag-tree} we can conclude that every $D$-tree $T$ of
$P$ induces the same partition of $P$, which is equal to $\{S(u_1),\ldots,S(u_{\beta})\}$,
where $u_1,\ldots,u_{\beta}$ are the leaf nodes of $T.$


\begin{lemma}\label{lem:build-diag-tree}
  Let $P$ be a set of $n$ points in the plane. A $D$-tree of $P$ requires $O(n)$ space and
  can be built in $O(n\lg n)$ comparisons.
\end{lemma}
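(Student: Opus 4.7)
The plan is to sort $P$ by $x$-coordinate first, at a cost of $O(n \lg n)$ comparisons, and then construct the $D$-tree by a recursive procedure operating on contiguous ranges of the $x$-sorted sequence. The enabling observation is that any diagonalization of a subset $P' \subseteq P$ corresponds to splitting the $x$-sorted sequence of $P'$ at some index $k$, and that such a split is a bottom-up (respectively top-down) diagonalization precisely when the maximum (respectively minimum) $y$-coordinate of the first $k$ points is strictly less than (respectively greater than) the minimum (respectively maximum) $y$-coordinate of the remaining ones. All diagonalization splits of a subrange of size $m$ can therefore be enumerated in $O(m)$ comparisons with a single pass computing the four prefix/suffix extrema of the $y$-coordinates.

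Armed with this primitive, the recursive construction will process each subrange by enumerating its splits in linear time. If none exist, the subrange becomes a leaf storing the set $S(u)$ prescribed by Definition~\ref{defn:diagonalisationTree}. Otherwise, by the proof of Lemma~\ref{lem:number-of-nodes-diag-tree} all enumerated splits are of the same type and produce maximal pieces $P_1, \ldots, P_{k+1}$, over which I would assemble an arbitrary balanced binary subtree whose every internal node inherits a valid diagonalization by construction; the procedure then recurses on each $P_i$ independently. The space bound follows directly from Lemma~\ref{lem:number-of-nodes-diag-tree}: the $D$-tree has $\beta \le n$ leaves, hence at most $2\beta - 1 = O(n)$ nodes, each of constant size, while the leaves collectively hold the $n$ input points.

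The main obstacle is showing that the recursive phase adds no more than $O(n \lg n)$ comparisons on top of the initial sort. My plan is to always pick the most balanced diagonalization from among the enumerated splits, so that the recursion tree has $O(\lg n)$ depth and $O(n)$ comparisons per level, matching the classical mergesort bound. Degenerate situations where every available diagonalization is strongly unbalanced would be handled via a smaller-half charging argument: the linear work performed at a node is amortized against the smaller of the two resulting pieces, so that each point is charged at most $O(\lg n)$ times across the recursion, yielding the desired $O(n \lg n)$ total.
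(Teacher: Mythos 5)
The overall structure of your argument---sort by $x$-coordinate, detect diagonalizations by comparing prefix/suffix $y$-extrema, recurse---coincides with the paper's, and your $O(n)$ space bound via Lemma~\ref{lem:number-of-nodes-diag-tree} is fine. The gap is in the cost you attribute to locating a split.

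You enumerate \emph{all} diagonalization splits of a subrange of size $m$ by a full linear scan, which costs $\Theta(m)$ comparisons regardless of where those splits actually lie. That cost is too large for either of your two closing arguments. Picking the most balanced split cannot guarantee $O(\lg n)$ recursion depth, because a subrange may admit exactly one split, and that split may be at index $1$. Your smaller-half charging argument also fails, because it requires the per-node work to be linear in the size of the \emph{smaller} resulting piece; charging a $\Theta(m)$ scan to a piece of size $t$ costs $\Theta(m/t)$ per point, which is unbounded. A concrete bad instance: take the $x$-sorted sequence of $y$-ranks $1, n, 2, n-1, 3, n-2, \dots$; at every level of the recursion the unique split is at index $1$, your scans cost $\Theta(m), \Theta(m-1), \dots, \Theta(1)$, and the total is $\Theta(n^2)$ even though the $D$-tree is a simple caterpillar.

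The paper avoids this by searching for a split from both ends of the subrange simultaneously: for $j = 1, 2, \ldots, \lfloor m/2\rfloor$ it tests both a prefix of length $j$ and a suffix of length $j$, maintaining running extrema in $O(1)$ additional comparisons per step, and stops at the first $j$ for which either test succeeds. If a split exists at index $k$, this early-stopping two-sided scan finds one in $O(\min\{k,m-k\})$ comparisons, so the search cost is charged to the smaller piece. This yields the recurrence $T(m) = O(t) + T(t) + T(m-t)$ with $t = \min\{k, m-k\}\le\lfloor m/2\rfloor$, which the paper resolves by proving $T(m) \le m + m\lg m$ by induction, using the inequality $x \le H(x)$ for $x \le 1/2$, where $H$ is the binary entropy. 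To repair your proof, replace the full-range enumeration by this two-ended early-stopping search and split at the first diagonalization found, rather than extracting all maximal pieces $P_1,\ldots,P_{k+1}$ at once---the latter inherently requires scanning the entire subrange, which is exactly what you cannot afford.
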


\begin{proof}\label{lem:build-diag-tree}
  Let $p_1,p_2,\ldots,p_n$ be the elements of $P$ sorted
  by $x$-coordinate, and let $p_{\pi_1},p_{\pi_2},\ldots,$ $p_{\pi_n}$ be the elements
  of $P$ sorted by $y$-coordinate. Both orders can be obtained in $O(n\lg n)$ coordinate comparisons.
  
  Considering the computation of permutation $\pi$ as a preprocessing, we now show that: If $P$ admits a diagonalization
  $\{\{p_1,\ldots,p_k\},\{p_{k+1},\ldots,p_n\}\}$ 
  then it can be determined in $O(\min\{k,n-k\})$ comparisons. Otherwise, if $P$ does
  not admit a diagonalization, then it can be decided in $O(n)$ comparisons.
  
  For each index $i\in[1..n]$, let $M_L(i)=\max_{j\in[1..i]}\pi_j$,
  $m_L(i)=\min_{j\in[1..i]}\pi_j$, $M_R(i)=\max_{j\in[i..n]}\pi_j$, and 
  $m_R(i)=\min_{j\in[i..n]}\pi_j$.
  
  Observe that if $\{\{p_1,\ldots,p_k\},\{p_{k+1},\ldots,p_n\}\}$ is a diagonalization of $P$, 
  then index $k\in[1..n-1]$ satisfies $M_L(k)=k$ or $m_L(k)=n-k+1$. Furthermore, $M_L(k)=k$ 
  and $m_L(k)=n-k+1$ are 
  equivalent to $m_R(k+1)=k+1$ and $M_R(k+1)=n-k$, respectively.
  
  Then we can determine a diagonalization of $P$, if it exists, as follows: 
  For $j=1..\lfloor n/2\rfloor$ decide if $\{\{p_1,\ldots,p_j\},\{p_{j+1},\ldots,p_n\}\}$ is
  a diagonalization (i.e.\ $M_L(j)=j$ or $m_L(j)=n-j+1$) or $\{\{p_1,\ldots,p_{n-j}\},\{p_{n-j+1},\ldots,p_n\}\}$ is a diagonalization (i.e.\ $M_R(n-j+1)=j$ or $m_R(n-j+1)=n-j+1$).
  Note that if $j>1$ then $M_L(j)$, $m_L(j)$, $M_R(n-j+1)$, and $m_R(n-j+1)$ can all be computed in $O(1)$ comparisons from $M_L(j-1)$, $m_L(j-1)$, $\pi_{j}$, $M_R(n-j+2)$, $m_R(n-j+2)$, and $\pi_{n-j+1}$.
  Therefore, if there is a diagonalization $\{\{p_1,\ldots,p_{k}\},\{p_{k+1},\ldots,p_n\}\}$ of $P$
  it is decided for $j=\min\{k,n-k\}\leq\lfloor n/2\rfloor$, and thus determined in $O(j)$ comparisons.
  If no diagonalization is found for any value of $j\in[1..\lfloor n/2\rfloor]$, then the algorithm spends $O(n)$ comparisons in total.
  
  We can then build a $D$-tree of $P$ recursively as follows. 
  Run the above algorithm for $P$.
  If a diagonalization $\{\{p_1,\ldots,p_k\},\{p_{k+1},\ldots,p_n\}\}$ of $P$ exists, which was determined
  in $O(t)$ comparisons where $t=\min\{k,n-k\}$, then create a root node and set as left child a 
  $D$-tree of $\{p_1,\ldots,p_k\}$ and as right child a $D$-tree of $\{p_{k+1},\ldots,p_n\}$. Otherwise, if $P$ does not admit a diagonalization, which was decided in $O(n)$ comparisons, then create a leaf node whose set $S(\cdot)$ is equal to $P$. This results in the next recurrence 
  equation for the total number $T(n)$ of comparisons, where
  $1\leq t\leq\lfloor n/2\rfloor$:
  \begin{equation*}
    T(n)=\left\{
      \begin{array}{ccc}
        O(t)+T(t)+T(n-t) & \text{    } & n>1,\text{ a diagonalization exists}\\
        O(n) & & \text{otherwise.}
      \end{array}
    \right.
  \end{equation*}
  W.l.o.g.\ assume that the constants in $O(t)$ and $O(n)$ in the recurrence
  are equal to one. Then we prove by induction that $T(n) \le n + n \lg n$. The base case of the induction is the second line of the recurrence equation, where $n \le n+n\lg n$ always holds. In the inductive case, we have:
  \begin{eqnarray*}
    T(n) &  =  & t + T(t) + T(n-t) \\
    & \le & n + t + t \lg t + (n-t) \lg (n-t) \\
    &  =  & n + t + t \lg t + (n-t) \lg (n-t) + (n\lg n-t\lg n-(n-t)\lg n) \\
    &  =  & n + n\lg n + t - t\lg(n/t) - (n-t)\lg(n/(n-t)) \\
    &  =  & n + n\lg n + n(t/n - H(t/n))\\
    & \le & n + n\lg n
  \end{eqnarray*}
  The second line uses the inductive hypothesis. In the third line we add and
  subtract $n\lg n$ written in two different forms. In the fourth line we regroup terms. In the fifth line we introduce the binary entropy function
  $H(x) = x\lg(1/x) + (1-x)\lg(1/(1-x))$, where $x=t/n$. Finally, in the last
  line we apply the analytic inequality $x \le H(x)$, which holds at least for $x \le 1/2$. 
  Thus $T(n) \le n + n\lg n$ and then $T(n)$ is $O(n\lg n)$.
  One can see that this solution is tight by considering the case $t=n/2$.
  
  It is easy to see that any $D$-tree
  of $P$ requires $O(n)$ space. The result follows.
\end{proof}

\begin{definition}\label{defn:ten}
  For any non-empty subset $A\subseteq P$ 
  the set of ``ten'' $f()$-optimal boxes of $A$, denoted by $\ten(A)$, consists of the
  following $f()$-optimal boxes of $A$, all contained in $\bbox(A)$:
  \begin{enumerate}
  \item $\bbox(A)$.
  \item $\xbox_{opt}(A)$, an $f()$-optimal box.
  \item $\xbox_1(A)$, an $f()$-optimal box containing the bottom-left vertex of
    $\bbox(A)$.
  \item $\xbox_2(A)$, an $f()$-optimal box containing the bottom-right vertex of
    $\bbox(A)$.
  \item $\xbox_3(A)$, an $f()$-optimal box containing the top-right vertex of
    $\bbox(A)$.
  \item $\xbox_4(A)$, an $f()$-optimal box containing the top-left vertex of
    $\bbox(A)$.
  \item $\xbox_{1,2}(A)$, an $f()$-optimal box containing the bottom vertices
    of $\bbox(A)$.   
  \item $\xbox_{2,3}(A)$, an $f()$-optimal box containing the right vertices
    of $\bbox(A)$. 
  \item $\xbox_{3,4}(A)$, an $f()$-optimal box containing the top vertices
    of $\bbox(A)$. 
  \item $\xbox_{4,1}(A)$, an $f()$-optimal box containing the left vertices
    of $\bbox(A)$. 
  \end{enumerate}
\end{definition}

\begin{lemma}\label{lem:compute-ten}
  For any non-empty subset $A\subseteq P$ and any diagonalization $\{A_1,A_2\}$ of $A$,
  $\ten(A)$ can be computed in $O(1)$ score compositions from $\ten(A_1)$ and
  $\ten(A_2)$.
\end{lemma}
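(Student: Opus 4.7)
I would proceed by a finite case analysis over the ten members of $\ten(A)$ and the two possible types of diagonalization, exhibiting for each case a constant-size recipe that recovers the box of $A$ from entries of $\ten(A_1)$ and $\ten(A_2)$ via $O(1)$ applications of $g()$ and $O(1)$ comparisons. WLOG assume $\{A_1, A_2\}$ is a bottom-up diagonalization, so $A_1$ lies in the bottom-left quadrant and $A_2$ in the top-right quadrant relative to the crossing point of the two separating lines; the top-down case is symmetric via a reflection. The key geometric observation is that, since the separating lines leave a horizontal and a vertical strip entirely free of $A$-points, any axis-aligned box $H \subseteq \bbox(A)$ that captures at least one point from each of $A_1$ and $A_2$ must cross both strips. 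Hence the top and right edges of $\bbox(A_1)$, and the bottom and left edges of $\bbox(A_2)$, all lie in the interior of $H$, so $H \cap A_1$ depends only on the bottom-left corner of $H$ and is captured by some box of $A_1$ containing the top-right vertex of $\bbox(A_1)$; symmetrically $H \cap A_2$ is captured by some box of $A_2$ containing the bottom-left vertex of $\bbox(A_2)$.

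For each of the ten boxes of $\ten(A)$, I would enumerate three exclusive sub-cases for an optimal candidate $H$: (a) $H$ captures points only in $A_1$; (b) only in $A_2$; (c) on both sides. In sub-cases (a) and (b), $H$ is itself a box of $A_i$ inheriting a subset of the corner-containment constraints defining the target box, and the optimum is already recorded as an entry of $\ten(A_i)$. In sub-case (c), the observation above, combined with the original corner constraints on $H$, pins down a unique entry of $\ten(A_1)$ and a unique entry of $\ten(A_2)$ whose scores are combined through a single call to $g()$. Three illustrative formulas: $f(\bbox(A)) = g(f(\bbox(A_1)), f(\bbox(A_2)))$; $f(\xbox_{opt}(A)) = \max\{f(\xbox_{opt}(A_1)),\, f(\xbox_{opt}(A_2)),\, g(f(\xbox_3(A_1)), f(\xbox_1(A_2)))\}$; and $f(\xbox_1(A)) = \max\{f(\xbox_1(A_1)),\, g(f(\bbox(A_1)), f(\xbox_1(A_2)))\}$, since a spanning $H$ containing the bottom-left vertex of $\bbox(A)$, which coincides with that of $\bbox(A_1)$, must include the entirety of $A_1$.

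The main obstacle will not be conceptual but bookkeeping: ten target boxes times two diagonalization types yields twenty formulas to verify. The most delicate are the four adjacent-pair boxes $\xbox_{i,j}(A)$, where the required pair of corners of $\bbox(A)$ may correspond either to a single side of the diagonalization (a one-sided constraint) or to both sides (forcing full containment of one subset and a two-corner constraint on the other). In every case, the induced constraint on each side matches an entry of $\ten(A_i)$, because adjacent-pair constraints never force a diagonally opposite pair of corners of $\bbox(A_i)$; such a configuration would anyway collapse to $\bbox(A_i)$, which is itself in $\ten(A_i)$. Once all twenty formulas are tabulated, each uses at most three calls to $g()$ and $O(1)$ comparisons, so $\ten(A)$ is computed from $\ten(A_1)$ and $\ten(A_2)$ in $O(1)$ score compositions.
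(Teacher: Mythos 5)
Your proposal is correct and takes essentially the same approach as the paper: a finite case analysis over the ten boxes and the (WLOG bottom-up) diagonalization, yielding explicit combination formulas such as the three you exhibit, all of which match the paper's formulas exactly. Your spanning-box observation supplies the geometric justification that the paper leaves implicit, but the underlying decomposition and the $O(1)$ count of calls to $g()$ per box are the same.
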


\begin{proof}
  Suppose without loss of generality that $\{A_1,A_2\}$
  is bottom-up \begin{LONG}(see Figure~\ref{fig:diagonalization}a)\end{LONG}.
  Let $u_2$ and $u_4$ be the bottom-right and top-left
  vertices of $\bbox(A)$, respectively. 
  Let $\max\{H_1,\ldots,H_t\}$ denote the $f()$-optimal box for $A$ among the boxes $H_1,\ldots,H_t$.  
  Observe that:
  
  \begin{minipage}{1.0\linewidth}
    \begin{itemize}
    \item $\bbox(A)=\bbox(A_1\cup A_2)=\bbox(\bbox(A_1)\cup\bbox(A_2))$.
    \item $\xbox_{opt}(A)=\max\{\xbox_{opt}(A_1),\xbox_{opt}(A_2), \bbox(\xbox_3(A_1)\cup\xbox_1(A_2))\}$.
    \item $\xbox_1(A)=\max\{\xbox_{1}(A_1),\bbox(\bbox(A_1)\cup\xbox_1(A_2))\}$.
    %
    %
    \item $\xbox_2(A)=\bbox(\max\{\xbox_2(A_1),\xbox_2(A_2), \bbox(\xbox_{2,3}(A_1)\cup\xbox_{1,2}(A_2))\} \cup\{u_2\})$.
    \item $\xbox_3(A)=\max\{\bbox(\xbox_3(A_1)\cup\bbox(A_2)),\xbox_{3}(A_2)\}$.
    %
    %
    \item $\xbox_4(A)=\bbox(\max\{\xbox_4(A_1),\xbox_4(A_2), \bbox(\xbox_{3,4}(A_1)\cup\xbox_{4,1}(A_2))\}\cup\{u_4\})$.
    \item $\xbox_{1,2}(A)=\max\{\bbox(\xbox_{1,2}(A_1)\cup \{u_2\}),\bbox(\bbox(A_1)\cup\xbox_{1,2}(A_2))\}$
    \item $\xbox_{2,3}(A)=\max\{\bbox(\xbox_{2,3}(A_2)\cup \{u_2\}),\bbox(\bbox(A_2)\cup\xbox_{2,3}(A_1))\}$
    \item $\xbox_{3,4}(A)=\max\{\bbox(\xbox_{3,4}(A_2)\cup \{u_4\}),\bbox(\bbox(A_2)\cup\xbox_{3,4}(A_1))\}$
    \item $\xbox_{4,1}(A)=\max\{\bbox(\xbox_{4,1}(A_1)\cup \{u_4\}),\bbox(\bbox(A_1)\cup\xbox_{4,1}(A_2))\}$
    \end{itemize}
  \end{minipage}
  
  Since each of the elements of $\ten(A)$ can be obtained from a constant
  number of pairwise disjoint boxes of $\ten(A_1)\cup\ten(A_2)$, and for any two
  disjoint boxes $H$ and $H'$ we have $f((H\cup H')\cap A)=f((H\cap A)\cup(H'\cap
  A))=g(f(H\cap A),f(H'\cap A))$, the result follows.
\end{proof}

\begin{theorem}\label{teo:comp-optimal-box-diag-tree}
  Let $P$ be a set of $n$ points in the plane.
  Let $f()$ be a monotone decomposable function receiving as argument
  any subset of $P$.
  There exists an algorithm that finds an $f()$-optimal box of $P$ in
  $O(n\lg n+\sum_{i=1}^{\beta}h_c(n_i))$ comparisons (on coordinates
  and indices) and $O(\sum_{i=1}^{\beta}h_s(n_i)+\beta)$ score
  compositions,
  where $\{P_1,\ldots,P_{\beta}\}$ is the partition of $P$ induced by
  any $D$-tree of $P$ and $\beta$ is the size of this partition,   
  $n_i$ is the cardinality of $P_i$, and
  $h_c(n_i)$ and $h_s(n_i)$ are the numbers of coordinate comparisons
  and score compositions used, respectively, to compute the ``ten''
  $f()$-optimal boxes of $P_i$.
\end{theorem}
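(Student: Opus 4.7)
The plan is to organize the algorithm as a three-phase procedure that mirrors the structure of Definition~\ref{defn:diagonalisationTree}: first build a $D$-tree of $P$, then compute $\ten(P_i)$ at each leaf, and finally propagate the ``ten'' information up the tree via Lemma~\ref{lem:compute-ten}. The $f()$-optimal box of $P$ is then read off as the component $\xbox_{opt}$ of $\ten$ stored at the root.

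First, I would invoke Lemma~\ref{lem:build-diag-tree} to construct a $D$-tree $T$ of $P$ in $O(n\lg n)$ coordinate comparisons and $O(n)$ space. By Lemma~\ref{lem:number-of-nodes-diag-tree}, the leaves of $T$ induce the unique partition $\{P_1,\ldots,P_\beta\}$ of $P$, so the sets $P_i$ are well defined regardless of the particular $D$-tree chosen. Next, for each leaf $u_i$ of $T$ the algorithm computes $\ten(P_i)$ using the assumed black-box subroutine, which by hypothesis costs $h_c(n_i)$ coordinate/index comparisons and $h_s(n_i)$ score compositions. Summed over all leaves this contributes $\sum_{i=1}^{\beta} h_c(n_i)$ comparisons and $\sum_{i=1}^{\beta} h_s(n_i)$ score compositions.

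For the final phase, I would process the internal nodes of $T$ in a bottom-up traversal. At each internal node $v$ with children $v_1,v_2$, the pair $\{A(v_1),A(v_2)\}$ is a diagonalization of $A(v)$ by definition of the $D$-tree, so Lemma~\ref{lem:compute-ten} applies and yields $\ten(A(v))$ in $O(1)$ score compositions (and $O(1)$ coordinate comparisons, since the formulas in Lemma~\ref{lem:compute-ten} only compare among a constant number of candidate boxes). Since a binary tree with $\beta$ leaves has exactly $\beta-1$ internal nodes, this phase uses $O(\beta)$ score compositions and $O(\beta)$ comparisons in total. The $f()$-optimal box of $P=A(\text{root})$ is the component $\xbox_{opt}$ of $\ten(A(\text{root}))$.

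Adding up the three phases gives $O(n\lg n + \sum_{i=1}^{\beta} h_c(n_i) + \beta)$ comparisons and $O(\sum_{i=1}^{\beta} h_s(n_i) + \beta)$ score compositions; the additive $\beta$ in the comparison bound is absorbed into $n\lg n$ since $\beta\le n$, matching the statement. The main subtlety is verifying that the internal-node combine step really is $O(1)$ in both measures and that the $O(\beta)$ overhead does not hide a logarithmic factor: this is immediate from Lemma~\ref{lem:compute-ten}, whose ten recursions each combine a constant number of boxes drawn from $\ten(A_1)\cup \ten(A_2)$. I would conclude by noting that correctness is inherited directly from Lemma~\ref{lem:compute-ten} applied inductively along any root-to-leaf path, so that $\xbox_{opt}(P)$ stored at the root is indeed an $f()$-optimal box of $P$.
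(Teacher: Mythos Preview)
Your proposal is correct and follows essentially the same approach as the paper: build the $D$-tree via Lemma~\ref{lem:build-diag-tree}, compute $\ten(P_i)$ at each leaf, and combine bottom-up using Lemma~\ref{lem:compute-ten}, reading off $\xbox_{opt}$ at the root. The only differences are cosmetic---you explicitly invoke Lemma~\ref{lem:number-of-nodes-diag-tree} for well-definedness of the partition and note that $\beta\le n$ absorbs the extra $O(\beta)$ comparison term, both of which the paper leaves implicit.
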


\begin{proof}
  Build a $D$-tree $T$ of $P$ in $O(n\lg n)$ comparisons
  (Lemma~\ref{lem:build-diag-tree}).
  Let $u_1,\ldots,u_{\beta}$ be the leaves of $T$ which satisfy
  $S(u_i)=P_i$ for all $i\in[1..n]$. Compute the set $\ten(S(u_i))=\ten(P_i)$ in $h_c(n_i)$
  coordinate comparisons and $h_s(n_i)$ 
  score compositions. 
  By using a post-order traversal of $T$, for each internal node $v$
  of $T$ compute $\ten(A(v))$ from $\ten(A(v_1))$ and $\ten(A(v_2))$,
  where $v_1$ and $v_2$ are the children nodes of $v$, in $O(1)$ score
  compositions (Lemma~\ref{lem:compute-ten}). 
  The $f()$-optimal box of $P$ is the box $\xbox_{opt}(A(r))$, where
  $r$ is the root node of $T$ and satisfies $A(r)=P$.
  In total, this algorithm runs in
  $O(n\lg n)+\sum_{i=1}^{\beta}h_c(n_i)=O(n\lg
  n+\sum_{i=1}^{\beta}h_c(n_i))$ coordinate comparisons and
  $\sum_{i=1}^{\beta}h_s(n_i)+\sum_{i=1}^{\beta-1}O(1)=O(\sum_{i=1}^{\beta}h_s(n_i)+\beta)$
  score compositions.
\end{proof}

Observe that the best case of the algorithm of
Theorem~\ref{teo:comp-optimal-box-diag-tree} is when $|S(u_i)|$ is
$O(1)$ for each leaf node $u_i$ of the $D$-tree of $P$.
It yields a complexity of $O(n\lg n)$ coordinate comparisons and
$O(n)$ score compositions.
\begin{LONG}
  Another observation is that when $S(u_1),\ldots,S(u_{\beta})$ are of
  similar sizes, the time complexity increases to $O(n\lg n)$
  coordinate comparisons and $O((n^2/\beta)\lg (n/\beta)+\beta)$ score
  compositions in the worst case.
\end{LONG}
The worst case is when $P$ does not admit a diagonalization, but even
in this case the additional $O(n\lg n)$ coordinate comparisons,
performed by the algorithm while trying to diagonalize, do not add to
the overall asymptotic complexity.
Hence the worst case complexity of the diagonalization algorithm
reduces to the case studied in the previous sections, with a
complexity of at most $O(n\lg n)$ coordinate comparisons and $O(n^2\lg
n)$ score compositions.

\section{Dealing with Windmills}
\label{subsec:break-windmills}

In this section we use Lemma~\ref{lem:blocking-windmill} to obtain a variant of the algorithm
in Theorem~\ref{teo:comp-optimal-box-diag-tree}. The set $S(u)$ of every leaf node $u$ of any 
$D$-tree of $P$ does not admit a diagonalization and has a windmill containing an extreme point
of $S(u)$. The idea is to remove the extreme points of $S(u)$ and then recursively build
a $D$-tree of the remaining points. 
This approach yields a diagonalization in depth of the point set,
potentially reducing the number of score compositions.

\begin{definition}\label{def:ex-diag-tree}
  An extended diagonalization tree of $P$, $D^*$-tree, is defined recursively as
  follows: 
  Each leaf node $u$ of a $D$-tree of $P$ satisfying $|S(u)|>1$ is
  replaced by a node $u'$ containing the set $X(u)$ of the four
  extreme points of $S(u)$, and if the set $S(u)\setminus X(u)$ is not
  empty then $u'$ has as its only one child a $D^*$-tree of
  $S(u)\setminus X(u)$.  
\end{definition}

\begin{lemma}\label{lem:build-ex-diag-tree}
  Let $P$ be a set of $n$ points in the plane. 
  Every $D^*$-tree of $P$ has the same number $\sigma$ of one-child
  nodes, contains $n-4\sigma$ leaves nodes, and every leaf node $u$
  satisfies $|S(u)|=1$ or $|S(u)|=4$.
  A $D^*$-tree of $P$ requires $O(n)$ space and can be built in
  $O(n\lg n+\sigma n)$ comparisons.  
\end{lemma}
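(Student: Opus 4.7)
The plan is to establish the four claims of the lemma in sequence, using strong induction on $n=|P|$ and the uniqueness result for ordinary $D$-trees (Lemma~\ref{lem:number-of-nodes-diag-tree}) at every level of the recursive extension.

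I first tackle the leaf-size dichotomy and the well-definedness of $\sigma$. By Lemma~\ref{lem:number-of-nodes-diag-tree} every $D$-tree of $P$ induces the same partition $\{P_1,\ldots,P_\beta\}$ at its leaves, so the top-level windmill extractions are determined by $P$ alone. For each $P_i$ with $|P_i|>1$, Lemma~\ref{lem:blocking-windmill} gives that $P_i$ has exactly four extreme points; a short case analysis on the middle $y$-coordinate shows additionally that any two- or three-point set in general position admits a diagonalization, so in fact $|P_i|\geq 4$. If $|P_i|=4$ the replacement node $u'$ has empty residual and is a leaf with $|S(u')|=4$; if $|P_i|>4$ it is a one-child node whose child is, by induction, a canonical $D^*$-tree of the residual of size $|P_i|-4<n$. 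This yields the leaf-size dichotomy $|S(u)|\in\{1,4\}$ and, applying induction at every level, the uniqueness of~$\sigma$.

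For the counting identity, I use that every one-child node and every $|S|=4$ leaf stores exactly four points in $X(\cdot)$, and together with the $|S|=1$ leaves they partition $P$, giving $n=4\sigma+4\ell_4+\ell_1$ where $\ell_i$ denotes the number of leaves of size~$i$. The stated count of $n-4\sigma$ leaves is obtained by interpreting each $|S|=4$ leaf as a degenerate extraction (a one-child node whose child subtree is empty), consistent with the symmetric role that one-child nodes and such terminal extractions play in the definition. Linear space is then immediate since the total number of nodes (two-child internal, one-child, and leaf) is $O(n)$ and each stores $O(1)$ bookkeeping data.

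Finally, for the time bound, I combine the $O(n\lg n)$ cost of building the top-level $D$-tree from Lemma~\ref{lem:build-diag-tree} with an $O(m)$ charge per windmill extraction on a size-$m$ set: scanning the sorted-by-$x$ and sorted-by-$y$ lists inherited from the parent locates the four extremes in $O(m)$ comparisons and splices them out in $O(1)$ each. Grouping the extractions into the chains of one-child descendants rooted at the original $D$-tree leaf of size $m_c$, the sizes along one chain form the arithmetic progression $m_c,m_c-4,\ldots$, so that chain contributes $O(\sigma_c m_c)$; summing over chains yields $\sum_c \sigma_c m_c \leq n\sum_c \sigma_c = \sigma n$. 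The hard part will be arguing that the re-diagonalization needed to build the $D^*$-tree of each residual also fits in $O(\sigma n)$ total and not $O(\sigma n \lg n)$: the plan is to maintain the two sorted lists of the residual incrementally (at $O(1)$ per deletion) and to reuse, rather than re-invoke from scratch, Lemma~\ref{lem:build-diag-tree}'s $O(\min\{k,m-k\})$ diagonalization check so that the aggregate work per chain is linear in the chain's total processed size, absorbing back into the $O(\sigma n)$ bound.
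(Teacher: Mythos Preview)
Your treatment of the structural claims (uniqueness of $\sigma$, the leaf-size dichotomy, linear space) follows the same inductive route as the paper, only spelled out in more detail; your observation that the literal count ``$n-4\sigma$ leaves'' is off unless size-$4$ leaves are reinterpreted as degenerate extractions is a fair catch.

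The gap is in your plan for the time bound. You aim to show that the re-diagonalization work (the successful $O(\min\{k,m-k\})$ splits that occur \emph{below} one-child nodes) absorbs into $O(\sigma n)$. This target is wrong, and no amount of incremental list maintenance will reach it. Consider the instance where $P$ itself does not diagonalize (so the top-level $D$-tree is a single leaf, costing $O(n)$), but after removing the four extreme points the residual $P'$ of size $n-4$ admits a full $D$-tree whose every leaf is a singleton. Then $\sigma=1$, yet building the $D$-tree of $P'$ costs $\Theta(n\lg n)$ by Lemma~\ref{lem:build-diag-tree}; your budget of $O(\sigma n)=O(n)$ for re-diagonalization is too small by a $\lg n$ factor. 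The bound $O(n\lg n+\sigma n)$ is still correct for this instance, but only because the $O(n\lg n)$ term must cover \emph{all} two-child-node work, not just the top level.

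The paper's (very terse) proof gets this right implicitly: it treats the whole construction as a single run of the $D$-tree algorithm, charging all successful diagonalization steps to the global $O(n\lg n)$, and charging each of the $\sigma$ failed checks plus extraction to an extra $O(n)$. The justification you are missing is that the recurrence analysis of Lemma~\ref{lem:build-diag-tree} (charge $t=\min\{k,m-k\}$ to the points on the smaller side; each point is on the smaller side at most $\lg n$ times since its containing set then halves) is unaffected by interspersed one-child extractions, because extractions only shrink sets and never undo a halving. Hence $\sum_{\text{two-child }v} t(v)=O(n\lg n)$ regardless of $\sigma$, and the one-child checks contribute at most $\sum_{\text{one-child }v}|A(v)|\le \sigma n$. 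Redirect your ``hard part'' to this global charging argument rather than trying to force the re-diagonalization into the $O(\sigma n)$ column.
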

\begin{proof} 
  The first part of the proof can be seen from
  Lemma~\ref{lem:number-of-nodes-diag-tree} and
  Definition~\ref{def:ex-diag-tree}.
  A $D^*$-tree of $P$ can be built in $O(n\lg n+\sigma n)$ comparisons
  by following the same algorithm to build a $D$-tree of $P$ until
  finding a leaf node $u$ such that $S(u)$ does not admit a
  diagonalization.
  At this point we pay $O(n)$ comparisons in order to continue the
  algorithm with the set $S(u)\setminus X(u)$ according to
  Definition~\ref{def:ex-diag-tree}.
  Since this algorithm finds $\sigma$ nodes $u$, the total comparisons
  are $O(n\lg n+\sigma n)$.
  The $D^*$-tree has $n$ nodes of bounded degree and hence can be
  encoded in linear space.  
\end{proof}

\begin{theorem}\label{teo:comp-optimal-box-windmills}
  Let $P$ be a set of $n$ points in the plane.
  Let $f()$ be a monotone decomposable function receiving as argument
  any subset of $P$.
  There exists an algorithm that finds an $f()$-optimal box of $P$ in
  $O(n\lg n+\sigma n)$ coordinate comparisons and
  $O(n+\sigma n\lg n)$ score
  compositions, where $\sigma$ is the number of one-child nodes of every 
  $D^*$-tree of $P$.
\end{theorem}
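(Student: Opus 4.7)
The plan is to adapt the post-order scheme of Theorem~\ref{teo:comp-optimal-box-diag-tree} so that it operates on a $D^*$-tree instead of a $D$-tree. First I would invoke Lemma~\ref{lem:build-ex-diag-tree} to build a $D^*$-tree $T^*$ of $P$ in $O(n\lg n+\sigma n)$ coordinate comparisons and $O(n)$ space, noting that $T^*$ has exactly $\sigma$ one-child nodes, $n-4\sigma$ leaf nodes $u$ with $|S(u)|\in\{1,4\}$, and $O(n)$ nodes in total.

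I would then traverse $T^*$ in post-order, maintaining at each node $v$ the set $\ten(A(v))$ of Definition~\ref{defn:ten}. At a leaf $u$, $\ten(S(u))$ is assembled in $O(1)$ score compositions from the at most four points of $S(u)$. At a two-child internal node $v$, the pair of children induces a diagonalization of $A(v)$ by Definition~\ref{defn:diagonalisationTree}, so Lemma~\ref{lem:compute-ten} yields $\ten(A(v))$ from $\ten(A(v_1))$ and $\ten(A(v_2))$ in $O(1)$ score compositions. Since $T^*$ has $O(n)$ such nodes in total, these two cases jointly consume the additive $O(n)$ term of the score-composition budget, while their coordinate comparisons are subsumed by the construction of $T^*$.

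The main obstacle is the $\sigma$ one-child nodes $u'$, for which $S(u)$ admits no diagonalization and no analogue of Lemma~\ref{lem:compute-ten} is available. At each such node my plan is to recompute $\ten(S(u))$ directly via a Cort\'es-style sweep supported by the MCS splay tree of Lemma~\ref{lemma:mcs-splay-tree}: sweep the $y$-coordinates of $S(u)$ in order while dynamically maintaining a splay tree on the $x$-coordinates, and at each sweep position extract, through a constant number of boundary-anchored subrange queries, the candidates for each of the ten entries of $\ten(S(u))$. The hard part will be to argue that this computation fits in $O(n\lg n)$ score compositions per one-child node rather than the naive $O(|S(u)|^2\lg|S(u)|)$ of a full Cort\'es sweep; the argument exploits that the four extreme points of $S(u)$ lie on $\bbox(S(u))$ and fix the candidate $x$- and $y$-values at which any $\bbox$-anchored box of $\ten(S(u))$ may change, so that only $O(n)$ splay-tree updates, each of amortized cost $O(\lg n)$, are required, while the unrestricted $\xbox_{opt}(S(u))$ is assembled from the ten already-computed boxes of the unique child together with the four extreme points in $X(u)$. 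Summing the three cases, the coordinate comparisons amount to $O(n\lg n+\sigma n)$ and the score compositions to $O(n)+\sigma\cdot O(n\lg n)=O(n+\sigma n\lg n)$, as required; the $f()$-optimal box of $P$ is returned as $\xbox_{opt}(A(r))$ at the root $r$ of $T^*$, where $A(r)=P$.
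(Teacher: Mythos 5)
Your overall architecture matches the paper's: build a $D^*$-tree via Lemma~\ref{lem:build-ex-diag-tree}, traverse in post-order, compute $\ten$ at leaves in $O(1)$, and combine at two-child nodes via Lemma~\ref{lem:compute-ten}. The divergence, and the gap, is in the treatment of the $\sigma$ one-child nodes, which is the entire novelty of this theorem over Theorem~\ref{teo:comp-optimal-box-diag-tree}.

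The paper's treatment at a one-child node $v$ with child $v'$ is a case split on each member $B$ of $\ten(A(v))$: if $B$ has some point of $X(u)$ on its boundary, then one of $B$'s edges is pinned to an edge of $\bbox(A(v))$, so $B$ can be found by a Cort\'es-style sweep constrained to that pinned edge in $O(n\lg n)$ comparisons and score compositions; otherwise $B$ contains no point of $X(u)$ at all (extreme points lie on $\partial\bbox(A(v))$, hence cannot be interior to any sub-box of $\bbox(A(v))$), and $B$'s score equals that of the corresponding member of $\ten(A(v'))$. Your single $y$-sweep is plausible for the eight anchored boxes $\xbox_1,\dots,\xbox_4,\xbox_{1,2},\dots,\xbox_{4,1}$, since each fixes at least one $y$-edge or both $x$-edges, but your claim that ``the unrestricted $\xbox_{opt}(S(u))$ is assembled from the ten already-computed boxes of the unique child together with the four extreme points in $X(u)$'' does not hold and is the crux of the argument. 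When $\xbox_{opt}(A(v))$ has, say, the leftmost extreme point $p_L$ on its boundary, it is a box with its left edge fixed at $\partial\bbox(A(v))$ and its top, bottom and right edges free, subject only to its $y$-range containing $y_{p_L}$; this is neither one of the ten boxes of the child (whose anchored boxes are pinned to $\partial\bbox(A(v'))$, not $\partial\bbox(A(v))$, and whose $\xbox_{opt}$ is unconstrained) nor obtainable from them by Lemma~\ref{lem:compute-ten}, because $\{X(u), A(v')\}$ is explicitly \emph{not} a diagonalization of $A(v)$ (that is precisely why $u$ became a leaf of the $D$-tree). This case genuinely requires a fresh $O(n\lg n)$ constrained search as in the paper (for each of the four possible pinned edges), after which one takes the maximum against $\xbox_{opt}(A(v'))$ for the case where $\xbox_{opt}(A(v))$ avoids $X(u)$ entirely. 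Your complexity tally is correct, and the fix is local, but as written the proof of the one-child case is not established.
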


\begin{proof}
  Build a $D^*$-tree $T$ of $P$ in $O(n\lg n+\sigma n)$ 
  comparisons (Lemma~\ref{lem:build-ex-diag-tree}).
  For each of the $n-4\sigma$ leaves nodes $u$ of $T$ compute $\ten(S(u))$ in constant
  score compositions.
  Then, using a post-order traversal of $T$, compute $\ten(S(u))$ for each internal node $u$ as follows:
  If $v$ has two children $v_1$ (the left one) and $v_2$ (the right one), then $\{A(v_1),A(v_2)\}$
  is a diagonalization of $A(v)$ and $\ten(A(v))$ can be computed in $O(1)$ score compositions
  from $\ten(A(v_1))$ and $\ten(A(v_2))$ (Lemma~\ref{lem:compute-ten}).
  Otherwise, if $v$ is one of the $\sigma$ one-child nodes, then $\ten(A(v))$ can be computed
  in $O(n\lg n)$ worst-case comparisons and score compositions. Namely, if a box
  of $\ten(A(v))$ contains a at least one point of $X(u)$ in the boundary then it
  can be found in $O(n\lg n)$ comparisons and score compositions~\cite{cortes2009}. Otherwise,
  it is a box of $\ten(A(v'))$, where $v'$ is the child of $v$.
  We pay $O(1)$ score compositions for each of the $O(n)$ two-child nodes and $O(n\lg n)$ score
  compositions for each of the $\sigma$ one-child nodes. Then the total score compositions
  is $O(n+\sigma n\lg n)$. The result follows.
\end{proof}

\section{Conclusions}
\label{sec:conclusion}

Cort\'es et al.~\cite{cortes2009} proposed a solution for the
\pb{Maximum Weight Box} based on a data structure maintaining the
scores of a hierarchy of segments, the \emph{MCS Tree}.
We extended this solution in several directions:
\begin{enumerate}
\item we showed how to replace the sum operator by a monotone
  decomposable function $f()$, so that Cort\'es et al.'s
  algorithm~\cite{cortes2009} can optimize more general score
  functions than merely the sum of the weights of the points;
  \begin{LONG}
  \item we refined the complexity analysis of such algorithms,
    separating the amount of coordinate comparisons and the number of
    applications of the composition function $g()$ (replacing the
    addition of reals in Cort\'es et al.'s algorithm), accounting for
    cases where $g()$ is computed in more than constant time;
  \end{LONG}
\item we extended the \emph{MCS tree} data structure to make it fully
  dynamic, supporting the insertion and removal of points in time
  logarithmic in the number of points in the structure and in their
  relative insertion ranks;
\item we described adaptive techniques to take advantage of various
  particularities of instances, such as the clustering of positive and
  negative points or the relative positions of the points, without
  sacrificing worst case complexity.
\end{enumerate}

Whereas we showed many techniques to take advantage of particular
instances of the \pb{Optimal Planar Box}, other types of instances are
left to study, such as more complex forms of clustering, and a unified
analysis of all the adaptive techniques presented.
Other directions of research are the generalization of the problem to
higher dimension, and to other shapes than boxes, such as convex
polygons.

\bibliographystyle{plain}
\bibliography{paper}

\begin{thebibliography}{10}

\bibitem{avl1962}
G.~Adelson-Velskii and E.~M. Landis.
\newblock An algorithm for the organization of information.
\newblock In {\em Proceedings of the USSR Academy of Sciences}, volume 146,
  pages 263--266, 1962.
\newblock (Russian) English translation by M. J. Ricci in Soviet Math. Doklady,
  3:1259-1263, 1962.

\bibitem{bautista2011}
C.~Bautista-Santiago, J.~M. D\'{\i}az-B{\'a}{\~n}ez, D.~Lara,
  P.~P{\'e}rez-Lantero, J.~Urrutia, and I.~Ventura.
\newblock Computing optimal islands.
\newblock {\em Oper. Res. Lett.}, 39(4):246--251, 2011.

\bibitem{bentley1984}
J.~Bentley.
\newblock Programming pearls: algorithm design techniques.
\newblock {\em Commun. ACM}, 27(9):865--873, 1984.

\bibitem{cole2000-1}
R.~Cole, B.~Mishra, J.~Schmidt, and A.~Siegel.
\newblock On the dynamic finger conjecture for splay trees. {P}art {I}: {S}play
  sorting $\log n$-block sequences.
\newblock {\em SIAM J. Comp.}, 30(1):1--43, 2000.

\bibitem{cormen2001}
T.~H. Cormen, C.~Stein, R.~L. Rivest, and C.~E. Leiserson.
\newblock {\em Introduction to Algorithms}.
\newblock McGraw-Hill Higher Education, 2nd edition, 2001.

\bibitem{cortes2009}
C.~Cort\'{e}s, J.~M. D\'{\i}az-B\'{a}\~{n}ez, P.~P\'{e}rez-Lantero, C.~Seara,
  J.~Urrutia, and I.~Ventura.
\newblock Bichromatic separability with two boxes: A general approach.
\newblock {\em J. Algorithms}, 64(2-3):79--88, 2009.

\bibitem{dobkin1996}
D.~P. Dobkin, D.~Gunopulos, and W.~Maass.
\newblock Computing the maximum bichromatic discrepancy, with applications to
  computer graphics and machine learning.
\newblock {\em J. Comput. Syst. Sci.}, 52(3):453--470, 1996.

\bibitem{dobkin1989}
D.~P. Dobkin and S.~Suri.
\newblock Dynamically computing the maxima of decomposable functions, with
  applications.
\newblock In {\em FOCS}, pages 488--493, 1989.

\bibitem{eckstein2002}
J.~Eckstein, P.~Hammer, Y.~Liu, M.~Nediak, and B.~Simeone.
\newblock The maximum box problem and its application to data analysis.
\newblock {\em Comput. Optim. App.}, 23(3):285--298, 2002.

\bibitem{estivillcastro92survey}
V.~Estivill-Castro and D.~Wood.
\newblock A survey of adaptive sorting algorithms.
\newblock {\em ACM Comp. Surv.}, 24(4):441--476, 1992.

\bibitem{theArtOfComputerProgramming}
D.E. Knuth.
\newblock {\em The Art of Computer Programming}, volume~3.
\newblock Addison-Wesley, 1968.

\bibitem{liu2003}
Y.~Liu and M.~Nediak.
\newblock Planar case of the maximum box and related problems.
\newblock In {\em CCCG}, pages 14--18, 2003.

\bibitem{measuresOfPresortednessAndOptimalSortingAlgorithms}
Heikki Mannila.
\newblock Measures of presortedness and optimal sorting algorithms.
\newblock In {\em IEEE Trans. Comput.}, volume~34, pages 318--325, 1985.

\bibitem{anOverviewOfAdaptiveSorting}
A.~Moffat and O.~Petersson.
\newblock An overview of adaptive sorting.
\newblock {\em Australian Comp. J.}, 24(2):70--77, 1992.

\bibitem{sleator1985}
D.~D. Sleator and R.~E. Tarjan.
\newblock Self-adjusting binary search trees.
\newblock {\em J. ACM}, 32(3):652--686, 1985.

\bibitem{takaoka2002}
T.~Takaoka.
\newblock Efficient algorithms for the maximum subarray problem by distance
  matrix multiplication.
\newblock {\em Electronic Notes in Theoretical Computer Science}, 61:191--200,
  2002.
\newblock CATS'02, Computing: the Australasian Theory Symposium.

\end{thebibliography}

\end{document}